\begin{document}


\font\msytw=msbm10 scaled\magstep1
\font\msytww=msbm7 scaled\magstep1
\font\msytwww=msbm5 scaled\magstep1
\font\cs=cmcsc10
\font\ottorm=cmr8

\let\a=\alpha \let\b=\beta  \let\g=\gamma  \let\d=\delta \let\e=\varepsilon
\let\z=\zeta  \let\h=\eta   \let\th=\theta \let\k=\kappa \let\l=\lambda
\let\m=\mu    \let\n=\nu    \let\x=\xi     \let\p=\pi    \let\r=\rho
\let\s=\sigma \let\t=\tau   \let\f=\varphi \let\ph=\varphi\let\c=\chi
\let\ps=\psi  \let\y=\upsilon \let\o=\omega\let\si=\varsigma
\let\G=\Gamma \let\D=\Delta  \let\Th=\Theta\let\L=\Lambda \let\X=\Xi
\let\P=\Pi    \let\Si=\Sigma \let\F=\Phi    \let\Ps=\Psi
\let\O=\Omega \let\Y=\Upsilon

\def\ins#1#2#3{\vbox to0pt{\kern-#2 \hbox{\kern#1 #3}\vss}\nointerlineskip}

\newdimen\xshift \newdimen\xwidth \newdimen\yshift


\def\vdd{{\vec d}}\def\vee{{\vec e}}\def\vkk{{\vec k}}\def\vii{{\vec i}}
\def\vmm{{\vec m}}\def\vnn{{\vec n}}\def\vpp{{\vec p}}\def\vqq{{\vec q}}
\def\vxxi{{\vec \xi}}\def\vrr{{\vec r}}\def\vtt{{\vec t}}
\def\vuu{{\vec u}}\def\bv{{\bf v}}
\def\vxx{{\vec x}}\def\vyy{{\vec y}}\def\vzz{{\vec z}}
\def\un{{\underline n}} \def\ux{{\underline x}} \def\uk{{\underline k}}
\def\xxx{{\underline\xx}}\def\vxx{{\vec x}} \def\vxxx{{\underline\vxx}}
\def\kkk{{\underline\kk}} \def\vkkk{{\underline\vkk}}
\def\bO{{\bf O}}\def\rr{{\bf r}}
\def\ss{{\underline \sigma}}\def\oo{{\underline \omega}}

\def\PPP{{\cal P}}\def\cE{{\cal E}}\def\cF{{\cal F}}

\def\cM{{\cal M}} \def\VV{{\cal V}}
\def\CC{{\cal C}}\def\FF{{\cal F}} \def\FFF{{\cal F}}

\def\HHH{{\cal H}}\def\WW{{\cal W}}
\def\TT{{\cal T}}\def\NN{{\cal N}} \def\BBB{{\cal B}}\def\III{{\cal I}}
\def\RR{{\cal R}}\def\LL{{\cal L}} \def\JJ{{\cal J}} \def\OO{{\cal O}}
\def\DD{{\cal D}}\def\AAA{{\cal A}}\def\GG{{\cal G}} \def\SS{{\cal S}}
\def\KK{{\cal K}}\def\UU{{\cal U}} \def\QQ{{\cal Q}} \def\XXX{{\cal X}}

\def\hh{{\bf h}} \def\HH{{\bf H}} \def\AA{{\bf A}} \def\bq{{\bf q}}
\def\BB{{\bf B}} \def\XX{{\bf X}} \def\PP{{\bf P}} \def\dd{{\bf d}} 
\def\bp{{\bf p}}
\def\vv{{\bf v}} \def\xx{{\bf x}} \def\yy{{\bf y}} \def\bk{{\bf k}}
\def\ba{{\bf a}}\def\bbb{{\bf b}}\def\bt{{\bf t}}\def\II{{\bf I}}
\def\ii{{\bf i}}\def\jj{{\bf j}}\def\bn{{\bf n}}\def\bS{{\bf S}}
\def\mm{{\bf m}}\def\Vn{{\bf n}}\def\uu{{\bf u}}\def\tt{{\bf t}}
\def\B{\hbox{\msytw B}}
\def\RRR{\hbox{\msytw R}} \def\rrrr{\hbox{\msytww R}}
\def\rrr{\hbox{\msytwww R}} \def\CCC{\hbox{\msytw C}}
\def\cccc{\hbox{\msytww C}} \def\ccc{\hbox{\msytwww C}}
\def\MMM{\hbox{\euftw M}}\font\euftw=eufm10 scaled\magstep1%
\def\NNN{\hbox{\msytw N}} \def\nnnn{\hbox{\msytww N}}
\def\nnn{\hbox{\msytwww N}} \def\ZZZ{\hbox{\msytw Z}}
\def\zzzz{\hbox{\msytww Z}} \def\zzz{\hbox{\msytwww Z}}
\def\SSS{{\bf S}}
\def\SSSS{\hbox{\euftwww S}}

\newcommand{\mR}{{\msytw R}}
\def\virg{\quad,\quad}


\def\\{\hfill\break}
\def\={:=}
\let\io=\infty
\let\0=\noindent\def\pagina{{\vfill\eject}}
\def\media#1{{\langle#1\rangle}}
\let\dpr=\partial
\def\sign{{\rm sign}}
\def\const{{\rm const}}
\def\tende#1{\,\vtop{\ialign{##\crcr\rightarrowfill\crcr\noalign{\kern-1pt
    \nointerlineskip} \hskip3.pt${\scriptstyle #1}$\hskip3.pt\crcr}}\,}
\def\otto{\,{\kern-1.truept\leftarrow\kern-5.truept\to\kern-1.truept}\,}
\def\defin{{\buildrel def\over=}}
\def\wt{\widetilde}
\def\wh{\widehat}
\def\to{\rightarrow}
\def\la{\left\langle}
\def\ra{\right\rangle}
\def\qed{\hfill\raise1pt\hbox{\vrule height5pt width5pt depth0pt}}
\def\Val{{\rm Val}}
\def\ul#1{{\underline#1}}
\def\lis{\overline}
\def\V#1{{\bf#1}}
\def\be{\begin{equation}}
\def\ee{\end{equation}}
\def\bea{\begin{eqnarray}}
\def\eea{\end{eqnarray}}
\def\bd{\begin{definition}}
\def\ed{\end{definition}}

\def\nn{\nonumber}
\def\pref#1{(\ref{#1})}
\def\ie{{\it i.e.}}
\def\cC{{\cal C}}
\def\lb{\label}
\def\eg{{\it e.g.}}
\def\sl{{\displaystyle{\not}}}
\def\Tr{\mathrm{Tr}}
\def\BBBB{\hbox{\msytw B}}
\def\bbb{\hbox{\msytww B}}
\def\TTT{\hbox{\msytw T}}
\def\bT{{\bf T}}
\def\mod{{\rm mod}}
\def\der{{\rm d}}
\def\bs{\backslash}
\newtheorem{corollary}{Corollary}[section]
\newtheorem{lemma}{Lemma}[section]
\newtheorem{example}{Example}[section]
\newtheorem{notation}{Notation}[section]
\newtheorem{remark}{Remark}[section]
\newtheorem{proof}{Proof}[section]
\newtheorem{definition}{Definition}[section]
\newtheorem{theorem}{Theorem}[section]
\newtheorem{proposition}{Proposition}[section]
\newtheorem{oss}{Remark}



%


\title{{\bf On the Sector Counting Lemma}}
\author{Zhituo Wang\\ Institute for Advanced Study in Mathematics, \\Harbin Institute of Technology\\
Email: wzht@hit.edu.cn}


\maketitle

\begin{abstract}
In this short note we prove a sector counting lemma for a class of Fermi surface on the plane which are $C^2$-differentiable and strictly convex. This result generalizes the one proved in \cite{FKT} for the class of $C^{2+r}$-differentiable, $r\ge3$, strictly convex and strongly asymmetric Fermi surfaces, and the one proved in \cite{FMRT} and \cite{BGM1}, for the class of $C^2$-differentiable, strictly convex and central symmetric Fermi surfaces. This new sector counting lemma can be used to construct interacting many-fermion models for the doped graphene, in which the Fermi surface is extended and quasi-symmetric.


\end{abstract}

\renewcommand{\thesection}{\arabic{section}}

\section{Introduction and Main results}
\subsection{The Fermi surface problem}
The Landau theory of the Fermi liquid \cite{landau1} is one the most important achievements in quantum many-body theory. It essentially states that, in a $d$-dimensional crystal, the excitations of an infinitely large collection of strongly interacting particles can be described as an equally large collection of weakly interacting quasi-particles, which carry the same quantum numbers as the original particles, and are characterized by a definite band structure $\e_0(\bk):=e(\bk)-\nu$ on $\RRR^d$, in which $e(\bk)$ is the dispersion relation and $\nu\in\RRR$ is the chemical potential. One important feature of a Fermi liquid is the existence of the Fermi surface (F.S.), which is defined as the zero set of the band structure: $\cF_0=\{\bk\in\RRR^d\ |\e_0(\bk)=0\}$. It is a compact hyper-surface in $\RRR^d$, across which the quasi-particle density function $n(\bk)$ is not continuous but has a jump. 

A major difficulty in the rigorous study of an interacting many-fermion system is that, interaction produces a deformation of the Fermi surface.
Consider a $d\ge2$-dimensional interacting many-fermions system at temperature $T>0$, defined by the grand-canonical Hamiltonian:
\bea\label{hamil}
&&H=\sum_\sigma\int\frac{d^d\bk}{(2\pi)^2}\e_0(\bk)a_{\bk,\s}^+a_{\bk,\s}+\sum_{\sigma,\tau}\frac{1}{2}\int\prod_{i=1}^4\frac{d^d\bk_i}{(2\pi)^d}(2\pi)^d\delta(\bk_1+\bk_2-\bk_3-\bk_4)\nn\\
&&\quad\quad\quad\times\hat u(\bk_1-\bk_3)a_{\bk_1,\s}^+a_{\bk_2,\tau}^+a_{\bk_4,\tau}a_{\bk_3,\s},
\eea 
in which $a^{\pm}$ are the fermionic creation and annihilation operators defined on $\RRR^d\times\{\uparrow,\downarrow\}$, with $\s,\tau\in\{\uparrow,\downarrow\}$ the spin indices. $\hat u$ is the Fourier transform of the two-body interaction potential. Let $k_0=2\pi T(n_0+\frac{1}{2})$, $n_0\in\ZZZ$, be the Fourier dual of the (imaginary) time variable $x_0\in[0,\frac{1}{T}]$. 
When $\hat u=0$, the free (non-interacting) propagator is $\hat S_0(k_0,\bk)=(ik_0-\e_0(\bk))^{-1}$. Obviously, the set of singularities of $\hat S_0(k_0,\bk)$ at $k_0=0$ is exactly $\cF_0$. When $\hat u\neq0$, the interacting propagator is given by $\hat S(k_0,\bk)=[ik_0-\e_0(\bk)-\Sigma((k_0,\bk),\e_0)]^{-1}$, in which $\Sigma((k_0,\bk),\e_0)$ is called the self-energy function, which is a highly non-trivial function of the band structure $\e_0$ and the interaction $\hat u$. For $||\hat u||$ small under a suitable norm $||\cdot||$, $\hat S(k_0,\bk)$ and $\Sigma(k_0,\bk)$ can be calculated by perturbation expansions.
The {\it interacting fermi surface} $\cF$ is defined as the set of singularities of $\hat S(k_0,\bk)$ at $k_0=0$:
\be
\cF_{\e_0}=\{\bk\in\RRR^d\ \vert\ \e_0(\bk)+\Sigma((0,\bk), \e_0)=0\},
\ee
which is in general different from $\cF_0$. 
This shift in the Fermi surface, also called the moving-Fermi surfaces problem, is a major difficulty in Quantum many-fermion problem and may cause divergence of many coefficients in the naive perturbation expansions.

This problem can be solved mainly in two approaches, one is to fix the interacting Fermi surface \cite{FST} such that it coincides with the non-interacting one $\cF_0$, by introducing
a suitable counter-term $\sum_\sigma\int\frac{d^d\bk}{(2\pi)^2}\delta \e_0(\bk)a_{\bk,\s}^+a_{\bk,\s}$ to the interaction potential. But this approach raises another difficulty, which is called {\it the inversion problem} (cf. eg. \cite{FST1}): given the band structure $\e(\bk):=\e_0(\bk)+\delta\e_0$, whether the counter-term $\delta\e_0$ can be uniquely decided, and how to determine $\e_0$ from $\e$. This problem has not been solved non-perturbatively. The other approach \cite{BGM1} is to use the renormalized interacting propagator $\hat S$, whose singular set defines the interacting Fermi surface, in the perturbation expansions. Either approach needs the renormalization group (RG) analysis. 

In order to perform the RG analysis, one needs to decompose the support of the propagators in $\RRR^d$ into a set of rectangles, called the {\it sectors}. The decomposition is performed in two successive steps. First of all, one decomposes the region in $\RRR^d$ close to the Fermi surface into shells surrounding the Fermi surface, with size of the shell depending on the temperature. But this is not enough to obtain the desired decaying behavior of the propagator, due to the mismatch of the volume form in momentum space and the position space \cite{FMRT}. One has to further decompose each shell into a set of sectors. This complicates the RG analysis, as in addition to the scaling indices labeling the shells, one needs to sum over properly the sector indices while taking into account the conservation of momentum. This is called the {\it sector counting problem}, which lies at the heart of Fermi liquid theory. A key step for proving the sector counting lemma is to estimate the flexibility of the constraint imposed by the conservation of momentum, namely, for any given $\bq\in\RRR^2$ and any Fermi surface $\cF$, the cardinality of the set $\{\bk_1,\bk_2\vert\ \bk_1,\bk_2\in\cF, \bk_1+\bk_2=\bq\}$. This is equivalent to consider the inverse image of the mapping $\Phi:\cF\times\cF\rightarrow\RRR^2$, $\Phi:(\bk_1,\bk_2)\mapsto\bq$, in which $\Phi$ is differentiable but not necessarily injective.  
The systematic way of estimating this number is cumulated into {\it the parallelogram lemmas}.

The sector counting lemmas for the convex and central symmetric Fermi surfaces have been proved 
by \cite{FMRT} \cite{DR1} and \cite{BGM1}. In particular, the authors of \cite{BGM1} have solved the inversion problem for the doped Hubbard model on the square lattice, following the second approach. But the sector counting lemma of \cite{BGM1} can't be applied to more general Fermi surfaces. The sector counting lemma for the strictly convex and strongly asymmetric Fermi Surfaces have been proved in \cite{FKT}. Based on this lemma and by introducing the counter-term, the authors constructed an interacting many-fermions model which exhibits Fermi liquid behaviors at zero temperature. But they can't provide a non-perturbative solution of the inversion problem. The sector counting lemma for Fermi surfaces that are not strictly convex but have flat edges have been proved in \cite{Riv} and \cite{RivWa, RivWb}.

\subsection{The main results}
In this paper we consider the $2$-d many-fermions models with two classes of Fermi surfaces: the {\it quasi-asymmetric} Fermi surface (cf. Definition \ref{fs3}) and the {\it quasi-symmetric} Fermi surfaces (Definition \ref{fs4}), both are required to be strongly convex and $C^2$ differentiable.
The first main result is Lemma \ref{main1}, in which we prove a parallelogram lemma for these Fermi surfaces. Since a quasi-symmetric Fermi surface is neither strongly asymmetric (see \cite{FKT}) not central symmetric (see \cite{FMRT}, \cite{DR1} and \cite{BGM1}), our result is an important generalization of these results. What's more, the parallelogram lemma is valid for any $C^2$-differentiable strongly convex Fermi surfaces, and for the class of concave Fermi surfaces such that at any point on the Fermi surface there exists only finite number of antipodal points (cf. Definition \ref{fs2}). Based on this result, we prove the sector counting lemma for a single scale 
(Theorem \ref{pro2}) and for multi-scales (Theorem \ref{pro3}).

Inspired by \cite{FKT}, we prove the main results by identifying the zero measure set ${\rm Gra}(a)=:\{(\bk,a(\bk))\vert\ \bk\in\cF\}$, the pairs of antipodal points, on which the mapping $\Phi$ fails to be injective. Since $\Phi$ is injective on the domain $\cF\times\cF\setminus {\rm Gra}(a)$, the implicit function theorem can be applied. Then we prove that the zero measure set doesn't change the result of {\it sector counting}. 

This paper is organized as follows. In section 2 we recall basic definitions, notations of the Fermi surfaces. In section 3 we prove the parallelogram lemma (Lemma \ref{main1}) for the $C^2$ strictly convex F.S. which are quasi-asymmetric or quasi-symmetric. This lemma is valid up to a measure zero set. In Section $4$ we prove that the measure zero set doesn't change the sector counting and present the sector counting lemmas for a single scale and for multi-scales of sectorizations.
We expect this result can be used to solve the inversion problem for the model considered by \cite{FKT1} without introducing counter-term, as well as constructing models with more general FS, like the honeycomb Hubbard model for the study of graphene. Some results proved in this paper have analogues in \cite{FKT} and their proofs are almost identical. So we simply omit these proofs and ask the interested readers to consult \cite{FKT} for details.



\section{Preliminary}
\subsection{Fermi surfaces and the sectors} 
Consider a many-fermions model on a $d=2$ lattice. 
\bd\label{fs1}
A Fermi surface is defined as the zero set of the dispersion relation. It is a closed curve in $\RRR^2$ which may contain several connected components, each of which is called a Fermi curve (F.C.). A Fermi curve (also denoted by $\cF$) is called $C^r$ differentiable, $r\ge1$, if the dispersion relation $\e(\bk)$ is a $C^r$-differentiable function in a neighborhood of $\bk$, for all $\bk\in\cF$; It is called strictly convex if its curvature is bounded away from zero.
\ed
Let us choose an orientation for the fermi curve $\cF$: For any $\bk\in\cF$, let $\bt_{\bk}$ be the unit tangent vector to $\cF$ at $\bk$ and $\bn_{\bk}$ the inward pointing unit normal vector to $\cF$ at $\bk$. There is a differentiable function $\phi_{\bk}:I_{0,a}\rightarrow\RRR$, where $I_{0,a}$ is an interval in $\RRR$ centered at $0$ with size $a$, such that $s\mapsto\bk+s\bt_{\bk}+\phi_{\bk}(s)\bn_{\bk}$ is an oriented parametrization of $\cF$ near $\bk$. By construction we have $\phi_{\bk}(0)=\phi'_{\bk}(0)=0$ and $\phi''_{\bk}(0)$ is the curvature of $\cF$ at $\bk$.

\begin{definition}\label{fs2}
Let $\bk\in\cF$ be any point on the Fermi curve, an antipodal point $a(\bk)$ of $\bk$, $a(\bk)\neq\bk$, is a point in $\cF$ such that the tangent vector $\bt_{a(\bk)}$ to $\cF$ at $a(\bk)$ is parallel or anti-parallel to $\bt_{\bk}$. 
\end{definition}
Remark that, since $\cF$ is strictly convex, each point $\bk\in\cF$ has a unique antipodal point.

\begin{definition}\label{fs3}
Let $\cF$ be a $C^r$-differentiable Fermi curve. It is called strongly asymmetric (also called $C^r$-strongly asymmetric) if there is $n_0\in\NNN$, $n_0\le r$, such that for each $\bk\in\cF$, there exists $p\le n_0$ such that 
\be
\phi^{(p)}_{\bk}(0)\neq \phi^{(p)}_{a(\bk)}(0).\label{assy}
\ee
A Fermi curve $\cF$ is called $C^r$-quasi asymmetric if it is $C^r$-strongly asymmetric at almost all points on the Fermi surface, up to a zero measure set. 

\ed

\bd\label{fs4}
A Fermi curve $\cF$ is called quasi-symmetric if it is strictly convex and the isometric group is the dihedral group $D_{2n+1}$, $n\ge1$, $n\in\NNN$, i.e. the symmetry group of regular polygons with $2n+1$ sides. It is called $C^r$-quasi-symmetric if it is $C^r$-differentiable.
\ed
See Figure \ref{tripod} for a quasi-symmetric Fermi curve with symmetry group $D_3$, which can be considered as the F.C. for the Graphene system with doping \cite{RivW}.
\begin{figure}[htp]
\centering
\includegraphics[width=.7\textwidth]{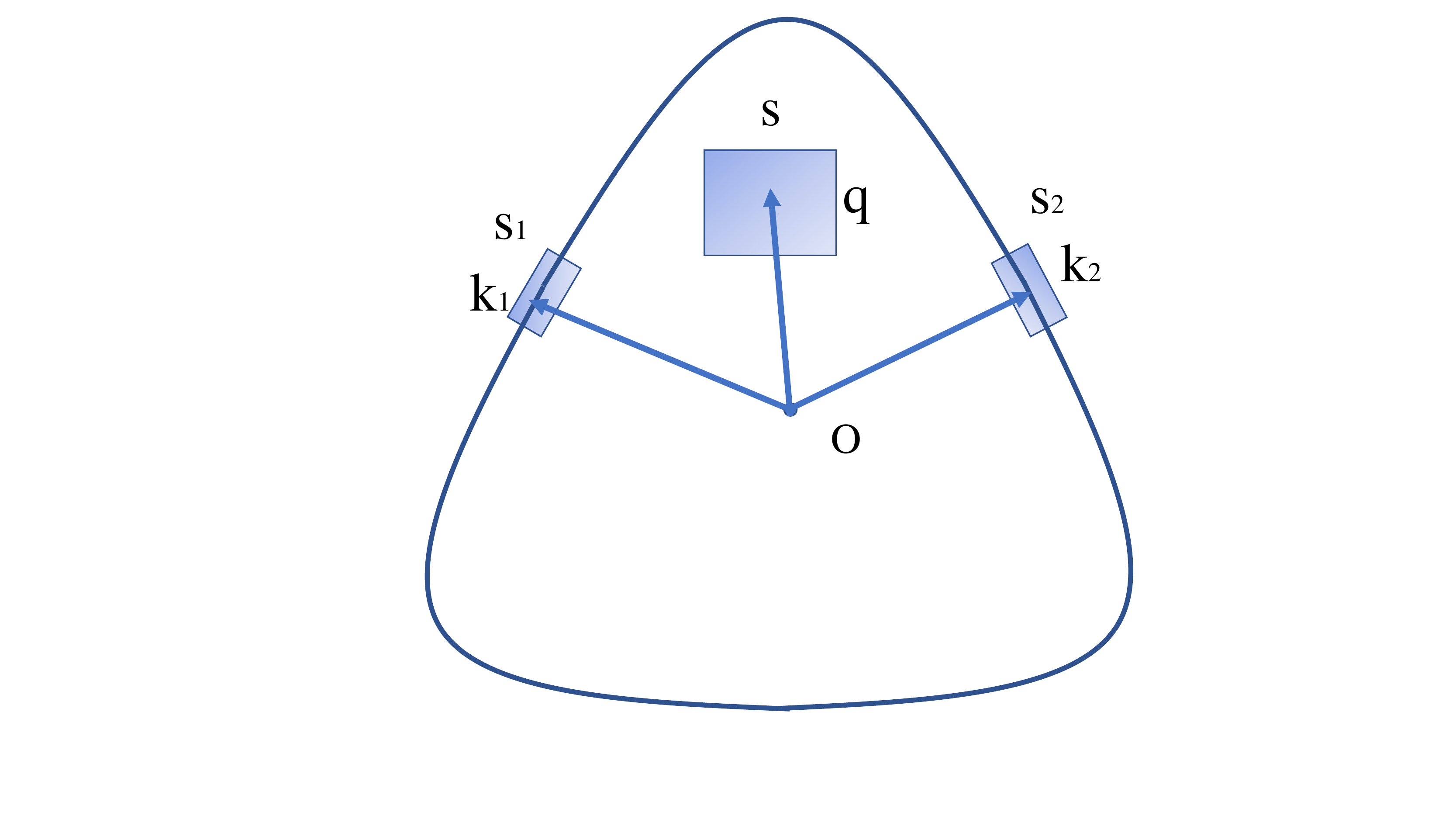}
\caption{\label{tripod}
A quasi-symmetric Fermi curve $\cF$ with $D_3$ symmetry and the parallelogram on $\cF$.}
\end{figure}
%

\bd[Shells and sectors]
By introducing suitable cutoff functions (eg. the Gevrey class of functions, cf. \cite{FMRT, RivW}), the support of the free propagator in the momentum space can be decomposed into shells labeled
by indices $j$, $j\ge 2$. Let 
$S^j=\{\bk\in\RRR^2\vert M^{-j}\le|\Omega(\bk)-\mu|\le M^{-j+1}\}$ be $j$-th shell in the momentum space.  Let $I$ be an interval on the Fermi curve $\cF$ and $\pi_\cF(\bk)$ is the orthogonal projection of $\bk$ on the Fermi curve. Then
\be
s=\{{\bk} \in S^j\vert\pi_\cF(\bk)\in I \}
\ee
is called a sector of length $|I|$ at scale $j$. Two different sectors $s$ and $s'$ are called neighbors if $s\cap s'\neq\emptyset$. 
\ed
\bd
A sectorization of length $l$ at scale $j$ around a Fermi curve $\cF$ is a set $\Sigma_\cF$ of sectors of length $l$ and at scale $j$ that obeys
\begin{itemize}
\item the set of sectors covers the Fermi curve;
\item each sector $s$ has precisely two neighbors in $\Sigma_\cF$.
\item if $s, s'\in\Sigma_\cF$ are neighbors then $\frac{1}{16}l\le|s\cap s'\cap \cF|\le \frac18 l$.
\end{itemize}
\ed
From the above definition we can easily find that there are at most $2\ell(\cF)/l$ sectors in $\Sigma_\cF$, where $\ell(\cF)$ is the length of $\cF$.

\begin{definition}\label{inverim}
Let $\cM$ be a sub-manifold of $\RRR^d$ and $\mu$ be the volume measure on $\BBB(\cM)$, the Borel $\sigma$-algebra over $\cM$. We say that a subset $A\subset \cM$ has $\mu$-measure zero if for every smooth chart $(U,\phi)$ for $\cM$, the subset $\phi(A\cap U)\subset \RRR^d$ has $m$-measure zero, where $m$ is the Lebesgue measure on $\RRR^d$.
Let $\Phi:\cM\rightarrow\RRR^d$ be a measurable mapping of class $C^1$ and is not necessarily injective. For any measurable $E\subset \cM$ and $y\in \RRR^n$, define the function $\#(E,y)=\#\{x\in E\vert \Phi(x)=y\}$.
In case $\Phi(x)=y$ for infinitely many $x\in E$, then we define $\#(E,y)=\infty$. 
\end{definition}

The Jacobian of the mapping $\Phi$ and the number $\#(E,y)$ is related by the following theorem:
\begin{theorem}\label{main2}
Let $\cM$ and $\Phi$ be defined as above. Let $J(x)=\Phi'(x)$ be the Jacobian at $x\in \cM$. Then for any measurable $E\subset \cM$, $\#(E,y)$ is a measurable function of $y\in \RRR^n$, and
\be
\int_E|J(x)|dx=\int_{\RRR^n}\#(E,y)dy.
\ee 
\end{theorem}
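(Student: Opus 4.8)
The result is the classical \emph{area formula} of geometric measure theory for a $C^1$ map between equidimensional spaces (here $\dim\cM=n$ and $\Phi$ takes values in $\RRR^n$; in the application below $\cM=\cF\times\cF$ and $n=2$). The plan is to reduce to the Euclidean setting by charts, dispose of the critical set by the ``easy half of Sard's theorem'', and on the regular set use the inverse function theorem together with the ordinary change-of-variables formula, then sum. For the chart reduction, note that $\cM$, being a submanifold of $\RRR^d$, is second countable, hence admits a countable partition $\cM=\bigsqcup_m W_m$ with each $W_m$ a Borel subset of a chart domain $U_m$. Both sides of the asserted identity are countably additive in $E$ and $\#(E,y)=\sum_m\#(E\cap W_m,y)$; so it suffices to prove the identity (and the measurability) when $E$ lies in a single chart. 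Pushing forward by the chart map, one is reduced to a $C^1$ map $\Phi\colon V\to\RRR^n$ on an open $V\subset\RRR^n$, with $|J(x)|$ now the absolute value of the ordinary Jacobian determinant, the Riemannian density of the chart having been absorbed into $dx$.

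\emph{The critical set.} Set $Z=\{x\in E:J(x)=0\}$. First I would show that $\Phi(Z)$ is Lebesgue-null. Cover $Z$ by cubes $Q$ of small side; on each $Q$ the derivative $D\Phi$ differs by at most $\varepsilon$ from a fixed matrix of rank $\le n-1$, so $\Phi(Q)$ is contained in a slab of thickness $O(\varepsilon\,\mathrm{diam}\,Q)$ and hence of volume $O(\varepsilon)\,|Q|$; summing over a cover of total volume $\le|Z|+1$ gives $|\Phi(Z)|\le C\varepsilon$, and letting $\varepsilon\downarrow0$ gives $|\Phi(Z)|=0$. Consequently $\#(Z,y)=0$ for every $y\notin\Phi(Z)$, i.e.\ for a.e.\ $y$, so $y\mapsto\#(Z,y)$ is measurable (being supported on a null set) and $\int_{\RRR^n}\#(Z,y)\,dy=0=\int_Z|J(x)|\,dx$. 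It remains to handle $E\setminus Z$.

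\emph{The regular set and conclusion.} On $E\setminus Z$ the derivative $D\Phi(x)$ is invertible, so by the inverse function theorem each point has an open neighborhood on which $\Phi$ is a $C^1$ diffeomorphism onto an open subset of $\RRR^n$; by the Lindel\"of property we may choose countably many such neighborhoods covering $E\setminus Z$ and refine them to a countable measurable partition $E\setminus Z=\bigsqcup_{i\ge1}E_i$, with $\Phi$ a diffeomorphism on a neighborhood of each $E_i$. Then each $\Phi(E_i)$ is measurable (the image of a measurable set under a diffeomorphism, which also maps null sets to null sets), $\Phi|_{E_i}$ is injective so $\#(E_i,y)=\mathbf 1_{\Phi(E_i)}(y)$, and the classical change-of-variables formula for diffeomorphisms gives
\be
\int_{E_i}|J(x)|\,dx=|\Phi(E_i)|=\int_{\RRR^n}\mathbf 1_{\Phi(E_i)}(y)\,dy=\int_{\RRR^n}\#(E_i,y)\,dy .
\ee
Since the $E_i$ are disjoint, $\#(E\setminus Z,y)=\sum_i\#(E_i,y)$ for every $y$; hence $\#(E\setminus Z,\cdot)$, and therefore $\#(E,\cdot)=\#(Z,\cdot)+\#(E\setminus Z,\cdot)$, is a countable sum of measurable functions, hence measurable. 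By the monotone convergence theorem,
\be
\int_{\RRR^n}\#(E\setminus Z,y)\,dy=\sum_i\int_{E_i}|J(x)|\,dx=\int_{E\setminus Z}|J(x)|\,dx ,
\ee
and adding the vanishing critical contribution and then summing over the charts yields the theorem. The convention $\#(E,y)=\infty$ of Definition \ref{inverim} causes no problem: on the set of $y$ for which the sum $\sum_i\mathbf 1_{\Phi(E_i)}(y)$ diverges, both sides of the identity take the value $+\infty$.

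\emph{Main obstacle.} The only genuinely nontrivial ingredients are: (i) the nullity of $\Phi(Z)$ — the ``easy half'' of Sard's theorem, which holds under mere $C^1$ regularity but requires the covering-and-slab estimate above rather than a crude volume bound; and (ii) the measurability of $y\mapsto\#(E,y)$, which I obtain by realizing it as a countable sum of indicators $\mathbf 1_{\Phi(E_i)}$ of measurable sets together with a term supported on the null set $\Phi(Z)$ — the key input being that a diffeomorphism (more generally a locally Lipschitz map) carries Lebesgue-measurable sets to Lebesgue-measurable sets. The chart reduction and the classical diffeomorphism change of variables are routine.
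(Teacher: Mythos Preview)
Your argument is the standard proof of the area formula for a $C^1$ map between equidimensional spaces, and it is correct as outlined: chart reduction, the nullity of the image of the critical set, and the diffeomorphic change of variables on a countable refinement of the regular set together yield both the measurability of $\#(E,\cdot)$ and the integral identity.

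There is nothing to compare against in the paper: the authors do not give a proof of this theorem at all. They simply state that ``this theorem is already known'' and refer the reader to \cite{munk}, pages~505--510, for a complete proof. The reference is Frank Jones, \emph{Lebesgue Integration on Euclidean Space}, and the argument there is essentially the one you have written --- decompose into pieces on which $\Phi$ is a diffeomorphism, apply the ordinary change of variables on each, and handle the critical set separately. So your proposal is both correct and in line with the cited source; the paper itself contributes no independent argument for this statement.
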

This theorem is already known. The interested reader could consult eg. \cite{munk}, pages 505-510, for a complete proof and we don't repeat it here. 

Feldman, Kn\"orrer and Trubowitz proved in \cite{FKT} the following parallelogram lemma concerning the strictly asymmetric Fermi curves.
\begin{lemma}[Lemma XX.7 of \cite{FKT}]\label{para}
Let $\cF\subset\RRR^2$ be a strongly asymmetric, strictly convex Fermi surface that is $C^{2+r}$ differentiable, $r\ge3$. Let $\BBB(\cF\times\cF)$ be the Borel sigma algebra over $\cF\times\cF$ and $\mu$ be the volume measure on $\BBB(\cF\times\cF)$. Define the measurable mapping $\Phi:\cF\times\cF\rightarrow\RRR^2$, $(\bk_1, \bk_2)\mapsto\bk_1+\bk_2$, which is not necessarily injective. Then there exists $X\subset\cF\times\cF$, for which $\mu(\cF\times\cF\setminus X)=0$, and a constant $C$, which depends on the geometry of $\cF$, such that for any measurable subset $E\subset X$ and any $\bq\in\RRR^2$, we have
\be
\#\{(E, \bq)\}\le C.
\ee
\end{lemma}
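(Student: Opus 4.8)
The plan is to reduce the counting of preimages $\#(E,\bq)$ to an estimate on the size of the branch locus of $\Phi$, and then to control the remaining regular part by a local inversion argument. First I would analyze when $\Phi$ fails to be a local diffeomorphism: writing $\bk_1 = \bk_1(s)$ and $\bk_2 = \bk_2(t)$ in the oriented arclength parametrizations, the Jacobian $J(s,t) = \det[\bt_{\bk_1},\bt_{\bk_2}]$ vanishes exactly when $\bt_{\bk_1} \parallel \bt_{\bk_2}$, i.e. when $\bk_2 = \bk_1$ or $\bk_2 = a(\bk_1)$. So the singular set of $\Phi$ is the union of the diagonal and the graph ${\rm Gra}(a)=\{(\bk,a(\bk))\}$, both of which are smooth curves in $\cF\times\cF$ and hence $\mu$-measure zero. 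I would therefore set $X = \cF\times\cF \setminus ({\rm diagonal} \cup {\rm Gra}(a))$, so that $\Phi$ is a local diffeomorphism at every point of $X$.

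Next I would show that on $X$ the map $\Phi$ is in fact (globally, or at least finite-to-one with a uniform bound). For $\bk_2$ off the diagonal and the antipode, strict convexity forces the tangent directions $\bt_{\bk_1}, \bt_{\bk_2}$ to be genuinely transverse, and the sum $\bk_1 + \bk_2 = \bq$ together with the convexity of $\cF$ pins down the pair: moving $\bk_1$ along $\cF$ forces $\bk_2 = \bq - \bk_1$ to trace out a translated copy of $-\cF$, and two strictly convex curves meet in at most two points, so for fixed $\bq$ there are at most two solutions $\bk_1$, hence at most two pairs, away from the singular locus. More carefully, I would invoke the implicit function theorem: since $J \ne 0$ on the compact set $X$ away from a neighborhood of the singular curves, the preimage $\Phi^{-1}(\bq) \cap X$ is discrete, and a compactness/degree argument (or Theorem~\ref{main2} applied with $E$ a small chart neighborhood) bounds its cardinality by a constant depending only on the curvature bounds and the length $\ell(\cF)$.

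The role of strong asymmetry enters precisely in handling points \emph{near} the singular set: although I have thrown away the measure-zero set ${\rm Gra}(a)$ itself, I still need the constant $C$ to be uniform, which requires that the Jacobian $J$ does not degenerate too fast as one approaches ${\rm Gra}(a)$. This is where condition \eqref{assy} is used: if $\phi^{(p)}_{\bk}(0) \ne \phi^{(p)}_{a(\bk)}(0)$ for some $p \le n_0$, then near an antipodal pair the two branches of $\cF$ (one near $\bk$, one near $a(\bk)$, reflected) separate at a controlled polynomial rate, so $\Phi$ restricted to a transversal has a zero of bounded order, and Malgrange/Weierstrass-type preparation bounds the number of preimages uniformly. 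I would quantify this by expanding $\Phi(\bk_1(s), \bk_2(t))$ in Taylor series around an antipodal pair and reading off that the vanishing order of the relevant component is at most $n_0$, giving $C = C(n_0, \text{curvature bounds}, \ell(\cF))$.

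The main obstacle I anticipate is exactly this uniformity near ${\rm Gra}(a)$: it is easy to get a \emph{pointwise} finite bound on $\#(E,\bq)$ from the implicit function theorem on $X$, but getting a \emph{single} constant $C$ independent of how close $E$ comes to the antipodal locus requires the quantitative transversality supplied by strong asymmetry, and carefully tracking how the constants in the preparation theorem depend on the $C^{2+r}$ norms of $\e$ and on the lower bound for the curvature. The rest — measurability of $\#(E,\bq)$, the reduction to $X$, and the covering of $\cF\times\cF$ by finitely many charts — is routine, drawing on Theorem~\ref{main2} and Definition~\ref{inverim}.
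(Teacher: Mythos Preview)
Your approach is essentially the one the paper (deferring entirely to \cite{FKT}) has in mind: identify the singular locus of $\Phi$ as the diagonal together with ${\rm Gra}(a)$, use local inversion away from it, and invoke strong asymmetry to bound the order of vanishing near antipodal pairs so that the constant $C$ is uniform. Two points of comparison are worth noting.

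First, the paper takes $X=\{(\bk_1,\bk_2):\bk_1\neq\bk_2\}$, \emph{keeping} the antipodal graph inside $X$, whereas you remove it. Both choices satisfy $\mu(\cF\times\cF\setminus X)=0$, so either is admissible for the statement; but the paper's choice is the sharper one and is exactly what strong asymmetry buys: condition \eqref{assy} guarantees that even \emph{at} an antipodal pair the preimage count is bounded, so there is no need to excise ${\rm Gra}(a)$. Your version, by contrast, is closer in spirit to Lemma~\ref{main1}, where ${\rm Gra}(a)$ is removed precisely because one lacks strong asymmetry (and accordingly the conclusion there holds only for almost every $\bq$). Since you then still appeal to strong asymmetry for uniformity, you are doing a little more work than necessary for your choice of $X$, and a little less than needed for the paper's.

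Second, the heuristic ``two strictly convex curves meet in at most two points'' is false as stated: $\cF$ and $\bq-\cF$ are both strictly convex ovals, but two such ovals can meet in more than two points (two generic ellipses already give four). You hedge this with ``more carefully'' and move to an implicit-function/compactness argument, which is the right instinct; just be aware that the global finiteness bound away from the singular locus genuinely requires either a careful covering argument with controlled overlaps or the Taylor-expansion argument you sketch in the last paragraph, not the two-intersection heuristic. The core of the proof is precisely the finite-order vanishing forced by $\phi^{(p)}_{\bk}(0)\neq\phi^{(p)}_{a(\bk)}(0)$ for some $p\le n_0$, which you identify correctly.
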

{\bf Proof}
The full measure set $X$ can be chosen as $X=\{(\bk_1,\bk_2)\vert \bk_1,\bk_2\in\cF,\bk_1\neq\bk_2\}$. The mapping $\Phi$ is injective on $X$ except at the points $\{(\bk,a(\bk))\}$, which form a measure zero set in $\cF\times\cF$. A detailed proof has been given in \cite{FKT}, so we don't repeat it here.
\qed
\section{The Parallelogram lemma}
As a generalization of Lemma \ref{para}, we can prove the following parallelogram lemmas.
\begin{lemma}\label{main1}
Let $\cF\subset\RRR^2$ be a strictly convex Fermi curve which is $C^2$ quasi-asymmetric or $C^2$ quasi-symmetric, let $\mu$ be the volume measure on $\BBB(\cF\times\cF)$. Define the mapping $\Phi:\cF\times\cF\rightarrow\RRR^2$, $\Phi:(\bk_1, \bk_2)\mapsto\bk_1+\bk_2$ which is not necessarily injective. Then there exists $X_1\subset\cF\times\cF$, with $\mu(\cF\times\cF\setminus X_1)=0$, and a constant $C$, which depends on the geometry of $\cF$, such that for any measurable subset $E\subset X_1$ and for almost every $\bq\in\RRR^2$, we have
\be\label{conmain1}
\#\{(E, \bq)\}\le C.
\ee
\end{lemma}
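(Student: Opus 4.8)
The plan is to reduce the counting of preimages $\#\{(E,\bq)\}$ to an estimate on the Jacobian of $\Phi$ via Theorem \ref{main2}, and to localize the only bad behavior on the measure-zero set of antipodal pairs, exactly as in the strongly asymmetric case treated by \cite{FKT}. First I would take $X_1 = \{(\bk_1,\bk_2)\in\cF\times\cF\ |\ \bk_1\neq\bk_2,\ \bk_2\neq a(\bk_1),\ \text{and (in the quasi-asymmetric case) }\bk_1,\bk_2\text{ lie in the strongly asymmetric locus}\}$; since $\cF$ is strictly convex, the graph $\mathrm{Gra}(a)$ is a $1$-dimensional submanifold of the $2$-dimensional manifold $\cF\times\cF$ and hence $\mu$-null, and the excluded asymmetric-failure set is $\mu$-null by hypothesis, so $\mu(\cF\times\cF\setminus X_1)=0$. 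On $X_1$ the mapping $\Phi$ is a local diffeomorphism: parametrizing $\cF$ by its oriented arclength parametrization and writing $\bk_1,\bk_2$ in the local coordinates $\phi_{\bk}$, the Jacobian of $\Phi=(\bk_1,\bk_2)\mapsto\bk_1+\bk_2$ is, up to a positive factor, the sine of the angle between the tangent directions $\bt_{\bk_1}$ and $\bt_{\bk_2}$, which vanishes precisely when $\bk_2=a(\bk_1)$ (or $\bk_2=\bk_1$). Off $\mathrm{Gra}(a)$ this is nonzero, so $|J|>0$ on $X_1$ and $\Phi$ is locally invertible there.

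Next I would turn the local invertibility into a uniform bound on the number of global preimages. The key geometric input is the \emph{parallelogram} structure: if $\bk_1+\bk_2 = \bk_1'+\bk_2' = \bq$ with all four points on the strictly convex curve $\cF$, then $\bk_1,\bk_2,\bk_2',\bk_1'$ are the vertices (in order) of a parallelogram inscribed in $\cF$ with center $\bq/2$; passing to the midpoint map, a preimage of $\bq$ corresponds to a chord of $\cF$ with midpoint $\bq/2$. For a strictly convex curve the midpoint map from chords to $\RRR^2$ has a controlled fiber structure, and the number of chords through a fixed center is governed by how many times the "antipodal-type'' degeneracy can occur. Here is where the asymmetry/symmetry hypothesis enters: in the strongly (or quasi-)asymmetric case the condition \eqref{assy} forces the degeneracy locus to be transverse, bounding the fiber uniformly by a constant $C$ depending only on $n_0$ and the curvature bounds, exactly as in \cite{FKT}; in the quasi-symmetric case, the dihedral symmetry group $D_{2n+1}$ is finite, so the extra solutions it produces are finite in number (at most on the order of $2n+1$), and away from the finitely many symmetry-related special chords the same transversality argument of \cite{FKT} applies. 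Combining these, there is a constant $C=C(\cF)$ such that for all $\bq$ outside a measure-zero set, every point of $\Phi^{-1}(\bq)\cap X_1$ is isolated and there are at most $C$ of them.

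Finally I would assemble these pieces through the area formula. For any measurable $E\subset X_1$, Theorem \ref{main2} gives $\int_E |J(x)|\,dx = \int_{\RRR^2}\#(E,\bq)\,d\bq$. Since $|J|$ is bounded below by a positive constant on the complement of any neighborhood of $\mathrm{Gra}(a)$ — and the neighborhoods of $\mathrm{Gra}(a)$ contribute nothing to preimages lying in $X_1$ — one deduces that $\#(E,\bq)$ is finite for almost every $\bq$; the uniform geometric bound from the previous paragraph upgrades "finite a.e.'' to "$\le C$ a.e.'', which is exactly \eqref{conmain1}. The main obstacle is the middle step: proving the \emph{uniform} bound on the number of inscribed parallelograms (equivalently, the fiber size of the midpoint map) for a merely $C^2$ curve, since one no longer has the higher derivatives used in \cite{FKT} to control the transversality of the degeneracy locus. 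The resolution is to use only the $C^2$ data — the curvature bounded away from zero and the finitely many symmetry relations (resp. the finite-order asymmetry condition on the almost-every point) — to show the degeneracy set meets each fiber in finitely many points with a cardinality bound uniform in $\bq$, and to handle the exceptional $\bq$ (those on the image of the degeneracy set, a measure-zero set) by the "for almost every $\bq$'' clause in the statement.
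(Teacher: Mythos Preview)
Your proposal is correct and follows essentially the same approach as the paper: define $X_1$ by excising the diagonal and the antipodal graph $\mathrm{Gra}(a)$ (both $\mu$-null by strict convexity), observe that off $\mathrm{Gra}(a)$ the Jacobian of $\Phi$ is nonvanishing, and invoke the parallelogram argument of \cite{FKT} for the uniform fiber bound. The paper's proof is considerably terser---it simply names $X_1=X\setminus\mathrm{Gra}(a)$ and cites Lemma~XX.7 of \cite{FKT}, omitting your chord-midpoint interpretation and the area-formula step (which it uses later, in Lemma~\ref{sec2b}, rather than here).
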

{\bf Proof}
Since $\cF$ is strictly convex, for any $\bk\in\cF$ there exists a unique antipodal point $a(\bk)$. Obviously $a:\cF\rightarrow\cF$, $a:\bk\mapsto a(\bk)$ is an isomorphism of $\cF$ and the graph of the mapping ${\rm Gra}(a):=\{(\bk, a(\bk))\vert \bk\in\cF, a(\bk)\in\cF\}$ is a $\mu$-measure zero set in $\cF\times\cF$. Since $\Phi$ is injective on $\cF\times\cF\setminus {\rm Gra}(a)$, we can identify the full measure set $X_1$ as
$X_1=X\setminus {\rm Gra}(a)$, in which $X=\{(\bk_1,\bk_2)\vert \bk_1,\bk_2\in\cF,\bk_1\neq\bk_2\}$.
By construction, no pair of antipodal points is contained in $X_1$. Then we only need to prove \eqref{conmain1} for pairs of vectors that are not antipodal of each other. This proof can be found in Lemma XX.7 in \cite{FKT}, and we don't repeat it here. 
\qed

\begin{remark}
It is important to notice that the conclusion of this lemma depends only on the convexity of the Fermi curve $\cF$ but not on the global symmetry of the F.C., i.e., if it is quasi-asymmetric or quasi symmetric, and this conclusion holds only up to a measure zero set. We will prove in the next section that the measure zero sets don't change the result of the sector counting. This can also be seen by the following simple argument. Let $\Sigma_\cF$ be any sectorization of $\cF$, whose elements are positive measure sets. Then any measure zero set in $\cF\times\cF$ must be contained in $\Sigma_\cF\times\Sigma_\cF$. So the existence of some measure zero sets doesn't change the result of counting sectors. 
\end{remark}

\begin{remark}
Remark that Lemma \ref{main1} can be generalized to the case of concave Fermi curves $\cF$ for which the antipodal points of any point $\bk\in\cF$ form a finite set.
\end{remark}

\section{The sector counting lemma}
In this part we shall consider the sector counting lemma for $2n$ sectors constrained by the conservation of momentum. As a first step, let us consider the case of $4$ sectors.
\subsection{Parallelogram lemma for two sectors}
Consider $4$ vectors $\bk_1,\cdots,\bk_4$ that belongs to the four sectors $s_1,\cdots,s_4$, respectively. We are interested in counting the cardinality of the configuration set of sectors $\{(s_1,\cdots,s_4)\vert\ \sum_i\bk_i=0,\ \bk_i\in s_i \}$ that are compatible with conservation of momentum. Let 
$\bk_3+\bk_4=-\bq$, this problem is equivalent to counting the number of decompositions of a subset $A\in\RRR^2$, to which $\bq$ belongs, into {\it sectors}. The result is also called the parallelogram lemma for two sectors.

\begin{definition}[Vector sum of sectors]
Let $\Sigma_\cF$ be a sectorization of a F.C. $\cF$, i.e. a set of sectors whose union form an $\e$-neighborhood $\cF(\e)$ of $\cF$. Let $s_1,s_2\in\Sigma_\cF$ be two sectors and $\bk_1\in s_1$, $\bk_2\in s_2$ be two vectors varying in the two sectors, respectively. Define a differentiable mapping $\Phi:\Sigma_\cF\times\Sigma_\cF\rightarrow\RRR^2$, $\Phi:\bk_1+\bk_2\mapsto\bq$. The image of the mapping $\Phi(s_1\times s_2)$, noted by
$\Phi(s_1\times s_2):=s_1+s_2$, is called the vector sum of the sectors $s_1$ and $s_2$. In the same way we can define the vector sum for any $n\ge2$ sectors.
\end{definition}
We are mainly interested in inverse problem of the vector sum of sectors, given a sectorization $\Sigma_\cF$ of a Fermi curve $\cF$ and a measurable subset $A\in\RRR^2$, the cardinality of the set $\{(s_1,s_2)\vert s_1,s_2\in\Sigma_\cF,\ s_1+s_2\subseteq A\}$. A general solution to this problem is called the {\it parallelogram lemma for sectors}, which is the simplest but most important example of the {\it sector counting lemma}. Instead of using techniques from planar differential geometry, Feldman, Kn\"orrer and Trubowitz proposed in \cite{FKT} a new method which reduces the counting problems to the problems of estimating volumes of sets in momentum space that are constrained by the conservation of momentum. Before proceeding, let us introduce the following definitions from Riemannian geometry.

\begin{definition}[\cite{gromov}]
Let $(\cM, \dd)$ be a $d$-dimensional Riemannian manifold with metric function $\dd$. Given $\epsilon>0$, a subset $\Gamma$ of $M$ is called $\epsilon$-separated if for any two different elements $\gamma, \gamma'\in\Gamma$, $\dd(\gamma,\gamma')\ge\epsilon$.
\end{definition}
Feldman, Kn\"orrer and Trubowitz proved in \cite{FKT} that:
\begin{lemma}[Lemma XX.4 in\cite{FKT}]\label{sec2a}
Let $(\cM, \dd)$ be a $d$-dimensional Riemannian manifold, $\Phi: \cM\rightarrow \RRR^d$ be a differentiable mapping. Let $B_r(x)=\{y\in M\vert d(x,y)<r\}$ be an open ball of radius $r$ around $x$, let $\mu$ be the volume measure on $\BBB(\cM)$, and
\be
V_{\cM,\epsilon}=\inf_{x\in M,\ 0<r\le\epsilon}\frac{\mu (B_{r/2}(x))}{r^2}.
\ee
Then for all $\epsilon_0>0$, $A\subset \RRR^d$ with $\epsilon$-neighborhood $A'(\epsilon)$, 
$0<\epsilon<\epsilon_0$, and all $\epsilon$-separated subsets $\Gamma\subset\cM$, we have:
\be
\#(\Phi^{-1}(A)\cap\Gamma)\le\frac{1}{\epsilon^n V_{\cM,\epsilon_0}}\mu(\Phi^{-1}(A'(\epsilon))).
\ee 
\end{lemma}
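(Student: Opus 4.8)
The plan is to prove Lemma~\ref{sec2a} by a volume-packing argument that rests on $\Phi$ not increasing Riemannian distance, i.e. $|\Phi(x)-\Phi(y)|\le\dd(x,y)$ for all $x,y\in\cM$. (This hypothesis, implicit in the setting of \cite{FKT}, is essential: an expanding $\Phi$ could shrink $\Phi^{-1}(A'(\epsilon))$ while leaving $\Phi^{-1}(A)\cap\Gamma$ large.) First I would set $\Gamma_0=\Gamma\cap\Phi^{-1}(A)$ and attach to each $\gamma\in\Gamma_0$ the open metric ball $B_{\epsilon/2}(\gamma)\subset\cM$; the aim is to show that these balls are pairwise disjoint and all contained in $\Phi^{-1}(A'(\epsilon))$, after which a volume count finishes the proof.

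The two observations behind this are elementary. \emph{Disjointness:} for $\gamma\neq\gamma'$ in $\Gamma$ one has $\dd(\gamma,\gamma')\ge\epsilon$ by $\epsilon$-separation, so the triangle inequality gives $B_{\epsilon/2}(\gamma)\cap B_{\epsilon/2}(\gamma')=\emptyset$. \emph{Localization:} if $\gamma\in\Gamma_0$ and $z\in B_{\epsilon/2}(\gamma)$, then $|\Phi(z)-\Phi(\gamma)|\le\dd(z,\gamma)<\epsilon/2<\epsilon$ while $\Phi(\gamma)\in A$, so $\Phi(z)$ lies in the $\epsilon$-neighborhood $A'(\epsilon)$; hence $B_{\epsilon/2}(\gamma)\subset\Phi^{-1}(A'(\epsilon))$ for every $\gamma\in\Gamma_0$. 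The set $A'(\epsilon)$ is open in $\RRR^d$ (a union of open balls) and $\Phi$ is continuous, so $\Phi^{-1}(A'(\epsilon))$ is open, in particular Borel, and $\mu(\Phi^{-1}(A'(\epsilon)))$ is well defined and bounds the total measure of any disjoint subfamily of these balls.

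To conclude I would invoke the uniform lower bound: for each $\gamma$, taking $r=\epsilon<\epsilon_0$ in the definition of $V_{\cM,\epsilon_0}$ gives $\mu(B_{\epsilon/2}(\gamma))\ge\epsilon^{2}V_{\cM,\epsilon_0}$. For any finite $F\subseteq\Gamma_0$ the balls $\{B_{\epsilon/2}(\gamma)\}_{\gamma\in F}$ are pairwise disjoint and contained in $\Phi^{-1}(A'(\epsilon))$, so
\be
\#(F)\,\epsilon^{2}V_{\cM,\epsilon_0}\ \le\ \sum_{\gamma\in F}\mu(B_{\epsilon/2}(\gamma))\ =\ \mu\Big(\bigsqcup_{\gamma\in F}B_{\epsilon/2}(\gamma)\Big)\ \le\ \mu(\Phi^{-1}(A'(\epsilon))).
\ee
Taking the supremum over finite $F\subseteq\Gamma_0$ yields $\#(\Phi^{-1}(A)\cap\Gamma)\,\epsilon^{2}V_{\cM,\epsilon_0}\le\mu(\Phi^{-1}(A'(\epsilon)))$, which is the asserted inequality once one notes $n=d=2$ in the exponent (and which is also valid, trivially, when either side is infinite).

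I expect the only genuinely delicate point to be the distance-non-increasing property of $\Phi$; the rest is routine bookkeeping with metric balls and the measure $\mu$. In the intended application $\cM=\cF\times\cF$ with a product metric and $\Phi(\bk_1,\bk_2)=\bk_1+\bk_2$, one must accordingly normalize the metric on $\cM$ so that $\Phi$ is a contraction (a constant factor is harmless, being absorbed into $V_{\cM,\epsilon_0}$) and check $V_{\cM,\epsilon_0}>0$ so that the bound is not vacuous; this is where the geometry of $\cF$ (strict convexity and the curvature lower bound) actually enters.
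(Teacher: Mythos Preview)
The paper does not actually supply a proof of this lemma: it is stated as ``Lemma XX.4 in \cite{FKT}'' and attributed to Feldman--Kn\"orrer--Trubowitz without argument, so there is nothing in the present paper to compare your proposal against directly. That said, your packing argument is the standard (and, to my knowledge, FKT's) proof: place disjoint half-radius balls around the points of $\Gamma\cap\Phi^{-1}(A)$, show they sit inside $\Phi^{-1}(A'(\epsilon))$, and divide volumes.

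Your identification of the hidden hypothesis is exactly right and worth stressing. As written, the lemma asserts the bound for an arbitrary differentiable $\Phi$, but the localization step $B_{\epsilon/2}(\gamma)\subset\Phi^{-1}(A'(\epsilon))$ genuinely requires a Lipschitz bound $|\Phi(x)-\Phi(y)|\le L\,\dd(x,y)$ with $L\le 2$ (your choice $L\le 1$ is a harmless normalization). Without it the lemma is false: take $\cM=\RRR^2$, $\Phi(x)=\lambda x$ with $\lambda$ large, $A$ a lattice of many well-separated points, and $\epsilon$ small---then $\Phi^{-1}(A'(\epsilon))$ has tiny measure while $\#(\Phi^{-1}(A)\cap\Gamma)$ can be made arbitrarily large. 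In the intended application $\Phi(\bk_1,\bk_2)=\bk_1+\bk_2$ on $\cF\times\cF$ with the induced product metric, so the Lipschitz constant is $\sqrt{2}$ and everything goes through after absorbing a constant into $V_{\cM,\epsilon_0}$; you handle this correctly in your final paragraph. You also correctly flag the exponent mismatch ($r^2$ in the definition of $V_{\cM,\epsilon}$ versus $\epsilon^n$ in the conclusion), which is simply the paper specializing to $d=2$.
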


Taking $\cM=\cF\times\cF$ and $d=2$, using Lemma \ref{main1}, we can prove the following lemma, which is a key step for proving the sector counting lemma.
\begin{lemma}\label{sec2b}
Let $\cF$ be a $C^2$-differentiable strictly convex Fermi curve. Let $\omega_1$ and $\omega_2$ be two positive real numbers such that $0<\omega_1<\frac{1}{2}\omega_2$. For any $\bp\in\cF$, define the set
\bea
&&\tilde\cM=\{(\bk_1, \bk_2)\in\cF\times\cF\vert \min[d(\bk_1), \bk_2), d(a(\bk_1), \bk_2)]\ge\omega_1\nn\\
&&\quad {\rm and}\ \min[d(\bk_i), \bp), d(a(\bk_i), \bp)]\le\omega_2\ {\rm for}\ i=1,2 \},
\eea
and the mapping $\Phi:\cF\times\cF\rightarrow\RRR^2$, $\Phi(\bk_1, \bk_2)=\bk_1+\bk_2$.
Then there exists positive constants $K$ depending only on the geometry of $\cF$, such that for all measurable subset $A\subset\RRR^2$, 
\be
\mu(\Phi^{-1}(A)\cap \tilde\cM)\le\frac{const}{\omega_1}m(A),\label{eq1}
\ee
where $\mu$ is the volume measure on $\BBB(\cF\times\cF)$ and $m$ is the Lebesgue measure on $\RRR^2$.
\end{lemma}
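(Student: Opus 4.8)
The plan is to combine the coarea-type identity of Theorem \ref{main2} with the parallelogram lemma (Lemma \ref{main1}) to bound $\int_E|J|$ from above, and then to bound the Jacobian $|J|$ of $\Phi$ from below on $\tilde\cM$. Set $E:=\Phi^{-1}(A)\cap\tilde\cM$. First note that $E$ lies in the full-measure set $X_1$ of Lemma \ref{main1}: on $\tilde\cM$ one has $d(\bk_1,\bk_2)\ge\omega_1>0$, hence $\bk_1\ne\bk_2$, and $d(a(\bk_1),\bk_2)\ge\omega_1>0$, hence $\bk_2\ne a(\bk_1)$, so $(\bk_1,\bk_2)$ is neither a diagonal nor an antipodal pair. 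Applying Theorem \ref{main2} with $\cM=\cF\times\cF$ to this $E$, and using that $\#(E,\bq)\le C$ for a.e. $\bq$ by Lemma \ref{main1} (valid for every $C^2$ strictly convex $\cF$, as noted in the remark after that lemma) together with $\#(E,\bq)=0$ for $\bq\notin\Phi(E)\subseteq A$, we get
\be
\int_E|J(\bk_1,\bk_2)|\,d\mu \;=\; \int_{\RRR^2}\#(E,\bq)\,d\bq \;\le\; C\,m(A).
\ee
Thus it suffices to prove that $|J|\ge c\,\omega_1$ on $\tilde\cM$ for some $c>0$ depending only on the geometry of $\cF$; then $c\,\omega_1\,\mu(E)\le\int_E|J|\,d\mu\le C\,m(A)$, which is (\ref{eq1}) with $\const=C/c$.

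For the lower bound on $|J|$, parametrize $\cF$ by arc length, so that $\mu$ is locally $ds_1\,ds_2$ and, with $\bk_i=\bk(s_i)$, the differential of $\Phi$ has columns $\bk'(s_1)=\bt_{\bk_1}$, $\bk'(s_2)=\bt_{\bk_2}$; hence $|J|=\bigl|\det[\,\bt_{\bk_1}\ \bt_{\bk_2}\,]\bigr|=|\sin\angle(\bt_{\bk_1},\bt_{\bk_2})|$. Let $\alpha(s)$ be the tangent angle; it is $C^1$ with $\alpha'=\kappa$, and by strict convexity together with $C^2$-ness on a compact curve one has $0<\kappa_{\min}\le\kappa\le\kappa_{\max}<\infty$, so $\alpha$ is strictly monotone and $|J|=|\sin(\alpha(s_2)-\alpha(s_1))|$. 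This vanishes exactly when $\bt_{\bk_2}=\pm\bt_{\bk_1}$, i.e. $\bk_2\in\{\bk_1,a(\bk_1)\}$; moreover the antipodal shift $s^*(s_1)$ (defined by $\alpha(s_1+s^*)=\alpha(s_1)+\pi$) is $C^1$ and, because $\kappa$ is bounded above and below, stays in $[\pi/\kappa_{\max},\,L-\pi/\kappa_{\max}]$ where $L=\ell(\cF)$, so the two zero curves $\{s_2=s_1\}$ and $\{s_2=s_1+s^*(s_1)\}$ on the torus $(\RRR/L\ZZZ)^2$ are a definite distance apart. I would then prove $|\sin(\alpha(s_2)-\alpha(s_1))|\ge c'\min\bigl(d(\bk_1,\bk_2),d(a(\bk_1),\bk_2)\bigr)$ by splitting into three regions: near the diagonal, linearize using $|\alpha(s_2)-\alpha(s_1)|\in[\kappa_{\min}|s_2-s_1|,\kappa_{\max}|s_2-s_1|]$ and $|\sin u|\ge\tfrac2\pi|u|$ on $[-\tfrac\pi2,\tfrac\pi2]$; near the antipodal curve, the same linearization around the value $\pi$; and on the complement of small neighborhoods of both curves, which is compact and misses both zero sets, a positive minimum exists by continuity. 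Finally $d(\bk_1,\bk_2)$ and $d(a(\bk_1),\bk_2)$ are comparable, uniformly in the geometry of $\cF$, to the arc-length distances $|s_2-s_1|$ and $|s_2-s_1-s^*(s_1)|$ along $\cF$, so the displayed inequality becomes $|J|\ge c'\min(d(\bk_1,\bk_2),d(a(\bk_1),\bk_2))\ge c'\omega_1$ on $\tilde\cM$, as required.

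The main obstacle is this uniform lower bound on $|\sin(\alpha(s_2)-\alpha(s_1))|$: one must show, with constants depending only on $\kappa_{\min},\kappa_{\max},\ell(\cF)$, that the tangents at $\bk_1$ and $\bk_2$ cannot be nearly (anti)parallel unless $\bk_2$ is near $\bk_1$ or near $a(\bk_1)$, and one must control the antipodal shift uniformly so the two zero curves do not approach each other. The linearization near each zero curve is routine; the content is the compactness/uniformity. Finally, I note that neither the localization $\min[d(\bk_i,\bp),d(a(\bk_i),\bp)]\le\omega_2$ nor the assumption $\omega_1<\tfrac12\omega_2$ enters the proof of (\ref{eq1}) — they only shrink $\tilde\cM$, hence $E$ — but this is the form in which $\tilde\cM$ appears in the sector-counting application, where it will play the role of $\cM$ in Lemma \ref{sec2a}.
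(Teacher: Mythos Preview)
Your proposal is correct and follows essentially the same route as the paper: compute the Jacobian of $\Phi$ as $|\sin\theta(\bk_1,\bk_2)|$, bound it below by $c\,\omega_1$ on $\tilde\cM$ via the distance constraint, and combine the area formula (Theorem \ref{main2}) with the bound $\#(E,\bq)\le C$ from Lemma \ref{main1} to get $c\,\omega_1\,\mu(E)\le\int_E|J|\,d\mu\le C\,m(A)$. Your observation that $\tilde\cM$ already sits inside $X_1$ (since $d(\bk_1,\bk_2)\ge\omega_1$ and $d(a(\bk_1),\bk_2)\ge\omega_1$ force $(\bk_1,\bk_2)$ off both the diagonal and ${\rm Gra}(a)$) is in fact cleaner than the paper's argument, which removes ${\rm Gra}(a)$ from $M_A$ as a separate step that is actually vacuous here; and your remark that neither the $\omega_2$-localization nor the inequality $\omega_1<\tfrac12\omega_2$ is used in the bound \eqref{eq1} itself is also correct.
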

{\bf Proof}
First of all, we can calculate explicitly the Jacobian $J(\bk_1, \bk_2)$, $(\bk_1, \bk_2)\in \tilde\cM$, for the mapping $\Phi$. Let $\theta(\bk_1,\bk_2)$ be the angle between the normal vectors to $\cF$ at $\bk_1$ and $\bk_2$, by simple calculations we find that 
\be J(\bk_1,\bk_2)=\sin\theta(\bk_1,\bk_2).\label{jcb}\ee 

From the definition of $\tilde\cM$ we find that:
\be
\vert \sin\theta(\bk_1,\bk_2)\vert\ge const \min[d(\bk_1), \bk_2), d(a(\bk_1), \bk_2)]\ge const\ \omega_1.
\ee
Now we consider the following two cases: 

(i), $\Phi^{-1}(A)\cap \tilde\cM=\emptyset$. Then \eqref{eq1} is obviously true. 

(ii), if $M_A:=\Phi^{-1}(A)\cap \tilde\cM\neq \emptyset$, then we have $\int_{A}\#(M_A,y)dy=\int_{M_A}|J(x)|dx$, by Theorem \ref{main2}. 
Let ${\rm Gra}(a)=\{(\bk,a(\bk))\vert\bk\in\cF\}\subset \cF\times\cF$ be the set of antipodal pairs and $M_A'=M_A\setminus {\rm Gra}(a)$, we have $\#(M_A,y)=\#(M'_A,y)$, for $a.e.\ y\in A$ (cf. \cite{munk}, Pages 505-510.).

Let $J_{min}$ and $J_{max}$ be the minimal and maximal value of the Jacobian 
$|J(x)|$, we have
\be
\int_{M'_A}|J|dx\ge J_{min} \mu(M'_A)\ge const\ \omega_1  \mu(M'_A)= const\ \omega_1  \mu(M_A),
\ee
where for the last equality we used the fact that $M_A=M'_A+{\rm Gra}(a)$ and ${\rm Gra}(a)$ is a zero measure set.
On the other hand we have
\be
\int_{A}\#(M_A,y)dy=\int_{A}\#(M'_A,y)dy \le \#(M'_A,y)_{max} m(A),
\ee
where $\#(M'_A,y)_{max}$ is the maximal number of pre-images of any $y\in A$. Since $\cF\setminus N_\cF$ is strongly asymmetric, we have $\#(M'_A,y)_{max}<n$, for a finite $n\in\NNN$.
So we have
\be
\#(M'_A,y)_{max} m(A)\ge \int_{A}\#(M_A,y)dy=\int_{M'_A}|J|dx\ge  const\ \omega_1  \mu(M_A).
\ee
and
\be
\mu(\Phi^{-1}(A)\cap \tilde\cM)=\mu(M_A)\le\frac{const}{\omega_1} m(A).
\ee
\qed

Combining the above two lemmas we can prove the following lemma, which is very similar to Lemma $XX.8$ of \cite{FKT}, except that the Fermi surface $\cF$ now has different geometric properties.
\begin{lemma}\label{main3}:
Let $0<\e<\omega_1/4$ and let $\Gamma$ be an $\e$-separated subset of $\cF$. Let $A$ be a rectangle in $\RRR^2$ having one pair of sides parallel to $\bn$ with length $L_1$ and a second pair of sides perpendicular to $\bn$ of length $L_2$. Then we have
\be
\#(\Phi^{-1}(A)\cap (\cF\times\cF)\cap(\Gamma\times\Gamma))\le\frac{const}{\omega_1\e^2}(L_1+\e\omega_2)(L_2+\e).
\ee
\end{lemma}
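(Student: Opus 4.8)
The plan is to apply Lemma \ref{sec2a} (the Feldman--Kn\"orrer--Trubowitz volume-counting lemma) to the manifold $\cM=\cF\times\cF$ with $d=2$, using the product $\e$-separated set $\Gamma\times\Gamma$, and then to bound the resulting volume $\mu(\Phi^{-1}(A'(\e)))$ by means of Lemma \ref{sec2b}. The first step is to observe that $\Gamma\times\Gamma$ is an $\e$-separated subset of $\cF\times\cF$ with respect to the product metric (if $(\bk_1,\bk_2)\neq(\bk_1',\bk_2')$, then one of the two coordinates differs, and there $d\ge\e$), and that $V_{\cF\times\cF,\e_0}$ is bounded below by a constant depending only on the geometry of $\cF$, since $\cF$ is a $C^2$ compact curve of bounded curvature. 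Hence Lemma \ref{sec2a} gives
\be
\#\bigl(\Phi^{-1}(A)\cap(\Gamma\times\Gamma)\bigr)\le\frac{const}{\e^2}\,\mu\bigl(\Phi^{-1}(A'(\e))\bigr),
\ee
where $A'(\e)$ is the $\e$-neighbourhood of the rectangle $A$.

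The second step is to estimate $\mu(\Phi^{-1}(A'(\e)))$. The difficulty is that Lemma \ref{sec2b} only controls the part of $\Phi^{-1}$ lying in the set $\tilde\cM$ where the two momenta (and their antipodes) are at distance $\ge\omega_1$ from each other and at distance $\le\omega_2$ from the fixed point $\bp$; equivalently, where the Jacobian $J=\sin\theta$ is bounded away from zero. So I would split $\Phi^{-1}(A'(\e))$ into the piece inside $\tilde\cM$ and the complementary piece. On the complement the normals at $\bk_1$ and $\bk_2$ are nearly (anti)parallel, i.e. $(\bk_1,\bk_2)$ is close to an antipodal pair; there the preimage of a thin rectangle is genuinely small, and one bounds its $\mu$-measure directly by a geometric argument using strict convexity (near an antipodal configuration the sum map $\Phi$ is, after restricting to the relevant one-dimensional directions, a local diffeomorphism transverse to the $\bn$-direction). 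The main piece is handled by Lemma \ref{sec2b}: $A'(\e)$ is a rectangle of sides $L_1+2\e$ parallel to $\bn$ and $L_2+2\e$ perpendicular to $\bn$, so $m(A'(\e))\le const\,(L_1+\e)(L_2+\e)$, whence $\mu(\Phi^{-1}(A'(\e))\cap\tilde\cM)\le\frac{const}{\omega_1}(L_1+\e)(L_2+\e)$.

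The third step is to reconcile the shape of the bound: the statement carries the factor $(L_1+\e\omega_2)$, not $(L_1+\e)$. This comes from the geometry of how $\tilde\cM$ sits over the $\bn$-direction: the set of antipodal (or near-antipodal) pairs projects under $\Phi$ onto an arc whose extent in the $\bn$-direction is of order $\e\omega_2$ rather than $\e$ (this is the standard parallelogram estimate — displacing an antipodal pair within the $\omega_2$-window moves the sum by an amount quadratic in the displacement, giving the $\omega_2$ factor, while the $\e$ is the separation scale). Concretely one covers the near-antipodal part of $\Phi^{-1}(A'(\e))$ by translates of a fixed cell, counts how many meet $A'(\e)$ — at most $const\,(L_1+\e\omega_2)(L_2+\e)/\e^2$ of them by a direct volume-in-$\RRR^2$ comparison using $|J|\ge const\,\omega_1$ away from exact antipodes — and adds this to the contribution of Lemma \ref{sec2b}. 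Collecting the two contributions and multiplying by the $1/\e^2$ from Lemma \ref{sec2a} yields
\be
\#\bigl(\Phi^{-1}(A)\cap(\cF\times\cF)\cap(\Gamma\times\Gamma)\bigr)\le\frac{const}{\omega_1\e^2}(L_1+\e\omega_2)(L_2+\e),
\ee
as claimed.

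I expect the main obstacle to be the careful bookkeeping near the antipodal set: one must show that the measure-zero locus ${\rm Gra}(a)$ — where $\Phi$ is not injective and the Jacobian degenerates — contributes only a term of the advertised size $\e\omega_2$ in the $\bn$-direction, and that the passage from the $\e$-neighbourhood $A'(\e)$ to this thickened geometric count does not lose the $\omega_1$ in the denominator. Everything else is a mechanical application of Lemmas \ref{sec2a} and \ref{sec2b} together with the $C^2$ strict convexity of $\cF$; this near-antipodal analysis is exactly where the quasi-asymmetry/quasi-symmetry hypothesis (ensuring $\#(M_A',y)_{max}$ is finite, via Lemma \ref{main1}) enters, as in Lemma XX.8 of \cite{FKT}.
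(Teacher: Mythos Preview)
Your overall strategy---apply Lemma~\ref{sec2a} to pass from the count to a volume, then Lemma~\ref{sec2b} to convert that volume into a Lebesgue-measure bound---is exactly the paper's. The paper's proof is literally three lines: Lemma~\ref{sec2a} gives $\#(\Phi^{-1}(A)\cap(\Gamma\times\Gamma))\le\frac{const}{\e^2}\mu(\Phi^{-1}(A'(\e)))$; Lemma~\ref{sec2b} gives $\mu(\Phi^{-1}(A'(\e)))\le\frac{const}{\omega_1}m(A'(\e))$; and then one invokes $m(A'(\e))\le(L_1+\e\omega_2)(L_2+\e)$.

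You substantially overcomplicate the middle and last steps. No splitting of $\Phi^{-1}(A'(\e))$ into $\tilde\cM$ and its complement is carried out, and no separate near-antipodal geometric or cell-covering argument is needed: Lemma~\ref{sec2b} is applied directly, and the measure-zero antipodal locus ${\rm Gra}(a)$ has already been absorbed into that lemma's proof (that is the whole point of the step $M_A=M_A'+{\rm Gra}(a)$ there). Your third step also misidentifies the origin of the factor $(L_1+\e\omega_2)$: in the paper it does \emph{not} arise from an antipodal cell count but is simply asserted as the area bound $m(A'(\e))\le(L_1+\e\omega_2)(L_2+\e)$ for the thickened rectangle. In the FKT setting this reflects the $\omega_2$-constraint built into $\tilde\cM$---the tangents at $\bk_1,\bk_2$ make angles at most of order $\omega_2$ with the direction perpendicular to $\bn$, so an $\e$-thickening along $\cF\times\cF$ projects to growth of order $\e\omega_2$ in the $\bn$-direction---and not any delicate behaviour near ${\rm Gra}(a)$. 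The ``main obstacle'' you anticipate therefore never materialises; the lemma is a mechanical concatenation of the two previous lemmas plus one area estimate.
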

\begin{proof}
Using the results of Lemmas \ref{sec2a} we have
\be
\#(\Phi^{-1}(A)\cap (\cF\times\cF)\cap(\Gamma\times\Gamma))\le \frac{const}{\e^2}\mu(\Phi^{-1}(A'(\e))).
\ee
Then using Lemma \ref{sec2b}, we have $\mu(\Phi^{-1}(A'(\e)))\le\frac{const}{\omega_1}m(A'(\e))$
and the fact that $m(A'(\e))\le (L_1+\e\omega_2)(L_2+\e) $, the result follows.
\end{proof}

\subsection{The Sector counting lemma}

In the previous section we proved that any measure zero sets in $\cF\times\cF$ doesn't change the result of sector counting. This result can be easily generalized to the case of any $2n$ sectors, as the vector sum of $2n$ sectors can be reduced to the parallelogram lemma for $n$-sectors, which can be further reduced to the one for $2$ sectors inductively, by consider the vector sum of $n-1$ sectors as a $single$ sector. In this section we consider the sector counting lemma for general $2n$ sectors. This part largely follows \cite{FKT}. Since the difference of the Fermi curves considered in \cite{FKT} and the ones considered in this paper is also a zero measure set, many results stated in Sections $XX$ and $XXI$ of \cite{FKT} can be adapted to the current paper. So we mainly present the results without proof. The interested readers are invited to consult \cite{FKT} for details.
\begin{definition}
Let $\Sigma_\cF$ be a sectorization of $\cF$, in which each sector $s_{\Lambda,l}$ is a rectangle of length $l$ and width $\Lambda$, such that $0\le\Lambda\le l$, $\Lambda\ge l^2$. Let $\bp\in \cF$ and $\Gamma$ be a an $l$-separated subset of $F$. Define
\bea
Mom_{2n-1}(\Gamma,\bp)&=&\{(\bk_1,\cdots,\bk_n)\in\Gamma^{2n-1}\vert\exists\ x_i\in s_{\Lambda,l}(\bk_i), i=1,\cdots,2n-1,\\ && {\rm such\ that}\ x_1+\cdots x_{2n-1}\in s_{\Lambda,l}(\bp) \}.\nn
\eea
\end{definition}
\bd
The tuple $(s_1,\cdots,s_n)\in \Sigma_\cF^{\otimes n}$ is called a configuration of sectors. A configuration of sectors is said to be consistent with the conservation of momentum if the tuple of
vectors $(\bk_1,\cdots,\bk_n)$, with $\bk_i\in s_i$, for $i=1,\cdots,n$, satisfies $\sum_{i=1}^n\bk_i=0$.
\ed
Following exactly the same procedures as in \cite{FKT} and the same techniques employed in proof of Lemma $XX.9$, Proposition $XX.10$, we can prove following proposition: 
\begin{proposition}\label{pro2a}
Let $\cF$ be a $C^2$-differentiable, strictly convex planar Fermi curve. Let $n\ge2$, $\delta\ge l$ and let $I_1,\cdots, I_{2n-1}$ be intervals of length $\delta$ in $F$. Assume that
\be
\frac{1}{3}\omega=\max_{1\le i\neq j\le 2n-1}\min (dist(I_i, I_j), dist(I_i, a(I_j)))>\max(\delta, 4l).
\ee
There exists a constant $K$, which depends on the geometry of the Fermi curve but is independent of the size of sectors, such that for all $l$-separated subsets $\Gamma$ of $F$, all $\bp\in F$,
\be
\# Mom_{2n-1}(\Gamma,\bp)\cap(I_1\times\cdots\times I_{2n-1})\le K n^2\Big(\frac{\delta}{l}+1\Big)^{2n-3}\Big(1+\frac{\Lambda}{l\omega}\Big).
\ee
\end{proposition}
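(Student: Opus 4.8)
The plan is to reduce Proposition \ref{pro2a} to Lemma \ref{main3} by an induction on $n$, exactly as in the treatment of Lemma $XX.9$ and Proposition $XX.10$ of \cite{FKT}, with the only new input being that our Fermi curve is merely $C^2$ quasi-asymmetric (or quasi-symmetric) rather than strongly asymmetric, which as observed in the Remark following Lemma \ref{main1} changes the geometric situation only up to a $\mu$-measure zero set in $\cF\times\cF$. First I would set up the base case $n=2$: here $Mom_3(\Gamma,\bp)\cap(I_1\times I_2\times I_3)$ counts triples $(\bk_1,\bk_2,\bk_3)\in\Gamma^3$ such that there exist $x_i\in s_{\Lambda,l}(\bk_i)$ with $x_1+x_2+x_3\in s_{\Lambda,l}(\bp)$. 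Writing $x_1+x_2=-x_3+(\text{something in }s_{\Lambda,l}(\bp))$, one sees $x_1+x_2$ must lie in a rectangle $A$ of the type appearing in Lemma \ref{main3}, namely with one side of length $L_2\sim l$ (controlled by the width $\Lambda\le l$ of the sectors, up to constants) in the normal direction $\bn$ at $\bk_3$ and one side of length $L_1\sim\delta$ in the tangential direction; the free choice of $\bk_3$ among the at most $\delta/l+1$ elements of $\Gamma\cap I_3$ and the at most $\omega_2/l$-sized fiber produces the extra combinatorial factors. Feeding this $A$ into Lemma \ref{main3} with $\omega_1=\tfrac13\omega$ and $\e=l$ gives a bound of the form $\frac{const}{\omega l^2}(\delta+l\omega_2)(l+l)$, which after rearranging is $\le Kn^2(\delta/l+1)(1+\Lambda/(l\omega))$ for $n=2$; here one uses $\delta\le\omega$ and $\omega_2\sim\delta$ to absorb the $l\omega_2$ term.

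For the inductive step I would treat the vector sum of the first $2n-3$ sectors as a \emph{single} effective sector, as the text itself suggests: given a configuration in $I_1\times\cdots\times I_{2n-1}$, group $\bk_1,\dots,\bk_{2n-3}$ and apply the induction hypothesis to count their contribution, which supplies the factor $(\delta/l+1)^{2n-5}$ and $(n-1)^2$, and then apply the two-sector estimate (the $n=2$ case, or directly Lemma \ref{main3}) to the pair consisting of this aggregated sum together with the remaining two sectors $s(\bk_{2n-2})$, $s(\bk_{2n-1})$, adding one more factor $(\delta/l+1)^2$ and another $(1+\Lambda/(l\omega))$. One must be careful that the aggregated "sector" has tangential extent controlled by $\delta$ and that the transversality hypothesis — the pairwise minimum of $dist(I_i,I_j)$ and $dist(I_i,a(I_j))$ exceeding $\max(\delta,4l)$ — is precisely what guarantees the Jacobian lower bound $|\sin\theta|\ge const\,\omega$ needed in Lemma \ref{sec2b} at every stage of the recursion; this is why the $\omega$ in the denominator appears exactly once per application of the two-sector lemma, and the powers of $n^2$ accumulate additively into $n^2$ overall rather than multiplicatively. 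The bookkeeping of how the constants from each step combine into a single $K$ independent of the sector size is routine and follows \cite{FKT} verbatim, so I would not reproduce it.

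The genuinely new point, and the one I would state explicitly, is the replacement of "$\cF$ is strongly asymmetric" by "$\cF$ is $C^2$ quasi-asymmetric or quasi-symmetric" in the step where \cite{FKT} invoke injectivity of $\Phi$ away from the antipodal graph. In \cite{FKT} this injectivity-up-to-measure-zero is used to conclude $\#(M'_A,y)_{max}<\infty$; here Lemma \ref{main1} gives exactly the same conclusion — a finite bound $C$ depending only on the geometry — on the full-measure set $X_1=\{(\bk_1,\bk_2):\bk_1\neq\bk_2\}\setminus{\rm Gra}(a)$, and the Remark following Lemma \ref{main1} (together with the argument that any $\mu$-null set of $\cF\times\cF$ is contained in $\Sigma_\cF\times\Sigma_\cF$) shows that discarding ${\rm Gra}(a)$ does not alter any of the sector-counting estimates. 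So the entire chain of arguments in \cite{FKT}, Lemmas $XX.4$–$XX.9$ and Proposition $XX.10$, goes through with $C^2$ regularity provided one systematically substitutes our Lemma \ref{main1}, Lemma \ref{sec2b} and Lemma \ref{main3} for their $C^{2+r}$ counterparts. The main obstacle, such as it is, is not analytic but organizational: verifying that the transversality hypothesis is stable under the "aggregate $2n-3$ sectors into one" reduction, i.e. that the pairwise separation of the remaining intervals, after collapsing, still dominates $\max(\delta,4l)$ so that Lemma \ref{sec2b} applies at each recursion level; this is where one must check that $\omega$ has been chosen with enough room, and it is handled exactly as in \cite{FKT} by the hypothesis $\tfrac13\omega>\max(\delta,4l)$.

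\qed
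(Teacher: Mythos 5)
Your overall strategy --- reduce everything to the two--sector estimates (Lemmas \ref{sec2b} and \ref{main3}), and note that the only genuinely new input relative to \cite{FKT} is Lemma \ref{main1}, which restores injectivity of $\Phi$ up to the measure--zero antipodal graph for $C^2$ quasi--asymmetric or quasi--symmetric curves --- is exactly what the paper intends (its ``proof'' is nothing more than a citation of Lemma $XX.9$ and Proposition $XX.10$ of \cite{FKT}). However, your specific inductive scheme has a structural flaw. The hypothesis
$\frac{1}{3}\omega=\max_{i\neq j}\min\bigl(dist(I_i,I_j),\,dist(I_i,a(I_j))\bigr)>\max(\delta,4l)$
is a \emph{maximum} over pairs: it guarantees that \emph{one} distinguished pair $(i,j)$ is $\omega/3$--separated both from each other and from each other's antipodes, and says nothing about the remaining pairs, which may be arbitrarily close or antipodal. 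Your recursion requires the transversality (hence the Jacobian lower bound $|\sin\theta|\ge const\,\omega$) to hold ``at every stage,'' and you flag the stability of this under the aggregation step as something ``handled by the hypothesis'' --- it is not. The correct argument applies the parallelogram estimate \emph{exactly once}, to the distinguished pair realizing the max, and bounds the number of admissible sector centers for each of the other $2n-3$ indices trivially by $\#(\Gamma\cap I_k)\le \delta/l+1$; the factor $n^2$ then comes from the area of the rectangle containing $x_i+x_j$ (the Minkowski sum $s_{\Lambda,l}(\bp)-\sum_{k\neq i,j}x_k$ has extent $O(n\Lambda)\times O(nl)$), not from an additive accumulation of $k^2$ over induction levels.

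A second, related defect: even if transversality were available at every level, your recursion multiplies in a fresh factor $(1+\Lambda/(l\omega))$ at each of the $\sim n$ stages, yielding $(1+\Lambda/(l\omega))^{n-1}$ rather than the single such factor in the stated bound, so the induction as written does not close on the claimed estimate. There are also smaller slips in the base case: in Lemma \ref{main3} the sides of length $L_1$ are the ones \emph{parallel} to $\bn$ (the normal, i.e.\ width, direction), so the rectangle $A$ should have $L_1\sim\Lambda$ and $L_2\sim l$ (up to factors of $n$), not $L_1\sim\delta$; with your assignment the arithmetic leading to $Kn^2(\delta/l+1)(1+\Lambda/(l\omega))$ does not go through as stated. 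The measure--theoretic point you isolate at the end --- substituting Lemma \ref{main1} for strong asymmetry and discarding ${\rm Gra}(a)$ --- is correct and is indeed the paper's only contribution here, but the combinatorial skeleton needs to be the ``one transversal pair, trivial count for the rest'' argument of \cite{FKT}, not a level--by--level recursion.
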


Remark the numerical factor $1/3$ is inessential and can be replaced by any other fractional number between $0$ and $1$ but not very close to $0$ or $1$.
\begin{example}
As an example, consider an anisotropic sectorization $\Sigma^{(j)}_\cF$ of $\cF$ for the single scale $j\ge2$ (cf.eg.\cite{FMRT}), such that each sector $s_{\Lambda,l}$ is a rectangle of length $\g^{-j}$ and width $M^{-j/2}$, in which $M\ge10$ is a fixed constant. Then we have $\delta\sim O(1)M^{-j/2}$, for some order $1$ constant $O(1)$. The centers of the sectors form an $l$-separated set with $l=M^{-j/2}$. We have $\delta/l\sim O(1)$, $\Lambda/l\omega\sim O(1)$, and
\be
\# Mom_{2n-1}(\Gamma,\bp)\cap(I_1\times\cdots\times I_{2n-1})\le const\ n^2O(1)^{2n-3},
\ee
which is bounded for any $n$.
\end{example}
The sector counting lemma is simply a reformulation of the above proposition:
\begin{theorem}[The sector counting lemma]\label{pro2}
Let $\cF$ be a quasi-asymmetric or quasi-symmetric Fermi curve which is $C^2$ differentiable.
Let $I_1,\cdots I_{2n}$ be intervals on the Fermi curve, each of which has length $M^{-j}\le \delta\le M^{-j/2}$, $j\in\ZZZ_+$, $j\ge2$. Let $\bk_i\in\RRR^2$ and $\bk'_i$ be the corresponding projection in $\cF$. Let $K$ be numerical constants which depends on the band structure. Let $s_1$ be a fixed sector. Let $\SS_{2n-1}$, $n\ge2$, be a set of $(2n-1)$-tuples of sectors $\{s_2,\cdots, s_{2n}\}$ such that there exist $\bk_i\in\RRR^2$, $i=1,\cdots 2n$ satisfying 
\be
\bk'_i\in s_i\cap I_i,\quad\vert\bk_i-\bk'_i\vert\le K M^{-j},\quad i=1,\cdots,2n 
\ee
and
\be
\vert\bk_1+\cdots+\bk_{2n}\vert\le K M^{-j}.
\ee
Then the cardinality of the set $\SS_{2n-1}$, noted by $\#\SS_{2n-1}$, is bounded by
\be K^{2n}\Big(\frac{\delta}{M^{-j/2}}\Big)^{2n-3}.\ee
\end{theorem}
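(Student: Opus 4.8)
The plan is to deduce Theorem \ref{pro2} from Proposition \ref{pro2a} by recasting its hypotheses in the combinatorial language of the sets $Mom_{2n-1}$, exactly as in the proof of the sector counting lemma in \cite{FKT}. First I would fix the scaling dictionary, working with the anisotropic sectorization $\Sigma^{(j)}_\cF$ of the example above: the sector length and the separation scale are $l=M^{-j/2}$ and the sector width is $\Lambda=M^{-j}$, while the intervals $I_1,\dots,I_{2n}$ live at the $\delta$-scale, $M^{-j}\le\delta\le M^{-j/2}$. The first real step is to absorb the fattening conditions $|\bk_i-\bk'_i|\le KM^{-j}$ and $|\bk_1+\cdots+\bk_{2n}|\le KM^{-j}$ into slightly enlarged sectors: since $KM^{-j}=K\Lambda$ is comparable to the sector width, the $KM^{-j}$-neighbourhood of a sector is contained in a bounded union of sectors of a sectorization of the same scale (of width $(2K+1)\Lambda$ and length $l+2K\Lambda$), so after replacing $K$ by a larger geometric constant we may assume that each $\bk_i$ lies in $s_i$ itself and that $\sum_i\bk_i$ lies in a ball of radius $K\Lambda$ about the origin.

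Second, I would exploit the fact that $s_1$ is fixed to pass from $2n$ to $2n-1$ sectors. Rewriting the momentum constraint as $\bk_2+\cdots+\bk_{2n}\in -\bk_1+B(0,K\Lambda)$ and letting $\bk_1$ run over the fixed curve segment $s_1\cap I_1$ — of length $\le\delta\le l$ — fattened by $K\Lambda$, the target $-\bk_1+B(0,K\Lambda)$ is confined to a fixed rectangle of dimensions $O(\delta)\times O(\Lambda)$; enlarging $K$ once more, and using strict convexity to choose a point $\bp\in\cF$ whose tangent direction matches that of $s_1$ (the residual positional mismatch being an $O(1)$ geometric constant), this rectangle lies inside a bounded union of sectors $s_{\Lambda,l}(\bp)$. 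With $\Gamma$ the $M^{-j/2}$-separated set of sector centres, each tuple $(s_2,\dots,s_{2n})\in\SS_{2n-1}$ then corresponds to a tuple of centres $(\bk'_2,\dots,\bk'_{2n})\in\Gamma^{2n-1}$ with $\bk'_i\in I_i$ belonging to $Mom_{2n-1}(\Gamma,\bp)\cap(I_2\times\cdots\times I_{2n})$ for one member $\bp$ of an $O(1)$-sized family of targets.

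Finally I would invoke Proposition \ref{pro2a}. When the intervals are mutually well separated it applies verbatim and bounds $\#\big(Mom_{2n-1}(\Gamma,\bp)\cap(I_2\times\cdots\times I_{2n})\big)$ by $Kn^2(\delta/l+1)^{2n-3}(1+\Lambda/(l\omega))$; when some of them are close I would instead use the crude estimate obtained by letting the $2n-3$ ``free'' sectors range over their intervals and then pinning the remaining two down to $O(1)$ choices by means of Lemma \ref{main1} (equivalently Lemma \ref{sec2b}). In either case, substituting $l=M^{-j/2}$ and absorbing $n^2$, the $O(1)$ factor $(1+\Lambda/(l\omega))$ and the $O(1)$ number of targets into a single constant raised to the power $2n$, one obtains the asserted bound $\#\SS_{2n-1}\le K^{2n}(\delta/M^{-j/2})^{2n-3}$. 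I expect the main obstacle to be the second step: matching the fixed sector $s_1$ together with the momentum-conservation ball to the single target sector $s_{\Lambda,l}(\bp)$ of the definition of $Mom_{2n-1}$, with all geometric constants controlled for the non-centrally-symmetric (quasi-symmetric) surfaces — i.e.\ checking that the bookkeeping of Sections XX--XXI of \cite{FKT} still goes through once strong asymmetry is weakened to $C^2$ strict convexity, which is legitimate precisely because the two classes of Fermi curves differ only by a $\mu$-null set (Lemma \ref{main1}).
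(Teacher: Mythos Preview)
Your proposal is correct and follows essentially the same route as the paper: the paper states explicitly that Theorem~\ref{pro2} ``is simply a reformulation of the above proposition'' (Proposition~\ref{pro2a}), which in turn is obtained by repeating the arguments of Lemma~XX.9 and Proposition~XX.10 of \cite{FKT}. Your write-up spells out the dictionary between the two formulations (absorbing the $KM^{-j}$-fattening into enlarged sectors, using the fixed $s_1$ to produce the target $s_{\Lambda,l}(\bp)$, and then reading off the bound with $l=M^{-j/2}$, $\Lambda=M^{-j}$) in more detail than the paper itself does, but the underlying strategy is identical.
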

Now we consider the sector counting problem with two scales.
\begin{definition}
Let $j>i\ge2$ be two scaling indices. Let $\frac{1}{M^{j-\frac{3}{2}}}\le l\le \frac{1}{M^{(j-1)/2}}$ and 
$\frac{1}{M^{i-\frac{3}{2}}}\le l'\le \frac{1}{M^{(i-1)/2}}$. Let $\Sigma^{(j)}_\cF$ and $\Sigma^{(i)}_\cF$ be two sectorizations of length $l$ at scale $j$ and length $l'$ at scale $i$,
respectively. Define $\#{\rm Cons}(s^{(j)}_1,\cdots, s^{(j)}_m;s^{(i)}_{m+1},\cdots,s^{(i)}_{n})$, in which $s^{(j)}_p$, $p=1,\cdots,m$, are sectors in $\Sigma^{(j)}_\cF$ and $s^{(i)}_q$, $q=m+1,\cdots,n$ are sectors in $\Sigma^{(i)}_\cF$, as the set of all sectors $(s^{(j)}_{m+1},\cdots,s^{(j)}_{n})\in{\Sigma^{(j)}_\cF}^{\otimes (n-m)}$, such that $s^{(j)}_i\cap s^{(i)}_i\neq\emptyset$ for $i=m+1,\cdots,n$, and the sectors $(s_1,\cdots,s_n)$ is consistent with conservation of momentum.
\end{definition}

Following \cite{FKT}, Section $XXI$, we can prove the sector counting lemma for changing of scales.
\begin{theorem}\label{pro3}
Let $\cF$ be a $C^2$-differentiable, strictly convex planar Fermi curve. Let $j>i\ge2$, let $\Sigma^{(j)}_\cF$ and $\Sigma^{(i)}_\cF$ be two sectorizations of $\cF$ defined as above, such that
$l<\frac14 l'$. Let $\omega'\ge 4l'$, and let $s_1\in\Sigma$ and $s_2'\cdots, s_n'\in\Sigma'$ such that $dist(s_k', s_l')\ge \omega'$
and $dist(s_k', a(s_l'))\ge \omega'$ for some $2\le k\neq l\neq n$. Then there exists a positive constant $K$, which is independent of the size of the sectors, such that
\be
\# Cons(s_1; s_2'\cdots s_n')\le K\Big(\frac{l'}{l}\Big)^{n-3}\Big(1+\frac{1}{M^{j-1}l\omega'}\Big).
\ee
\end{theorem}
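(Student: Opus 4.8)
The plan is to reduce the two-scale counting problem $\#\,\mathrm{Cons}(s_1;s_2'\cdots s_n')$ to the single-scale Proposition \ref{pro2a} by the standard refinement trick of \cite{FKT}, Section $XXI$. First I would observe that each coarse sector $s_k'\in\Sigma^{(i)}_\cF$ of length $l'$ can be covered by at most $O(l'/l)$ fine sectors of $\Sigma^{(j)}_\cF$ of length $l$; since $l<\tfrac14 l'$ by hypothesis, and by the third axiom of a sectorization the overlaps of neighbouring fine sectors are controlled, this covering is quantitatively tight. Thus the condition $s^{(j)}_k\cap s^{(i)}_k\neq\emptyset$ restricts each of the fine sectors $s^{(j)}_2,\dots,s^{(j)}_n$ appearing in a consistent configuration to lie inside a fixed family of $O(l'/l)$ fine sectors determined by $s_2',\dots,s_n'$. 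So it suffices to bound, for each fixed choice of intervals $I_2,\dots,I_n$ of length $\delta\sim l'$ sitting inside the coarse sectors, the number of consistent tuples of \emph{fine} sectors — this is exactly the quantity controlled by Proposition \ref{pro2a} (equivalently Theorem \ref{pro2}) at scale $j$, with the fine separation parameter $l$ playing the role of the $l$-separation and the coarse length $l'$ playing the role of $\delta$.

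Next I would set up the parameters so that Proposition \ref{pro2a} applies. Take $\Gamma$ to be the set of centres of the fine sectors, which is $l$-separated; take $\delta\sim l'$, so that the hypothesis $\delta\ge l$ is precisely $l<l'$, satisfied a fortiori from $l<\tfrac14 l'$. The separation hypothesis $\tfrac13\omega=\max_{i\ne j}\min(\mathrm{dist}(I_i,I_j),\mathrm{dist}(I_i,a(I_j)))>\max(\delta,4l)$ translates, up to the harmless fractional constant mentioned after Proposition \ref{pro2a}, into the stated hypothesis $\mathrm{dist}(s_k',s_l')\ge\omega'$ and $\mathrm{dist}(s_k',a(s_l'))\ge\omega'$ with $\omega'\ge 4l'\ge\max(\delta,4l)$. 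Plugging into the bound of Proposition \ref{pro2a} gives a factor $(\delta/l+1)^{2n-3}\sim(l'/l)^{2n-3}$ from the fine-sector count and a factor $(1+\Lambda/(l\omega))$ from the curvature/Jacobian estimate, where $\Lambda\le l^2$ is the sector width and $\omega\sim\omega'$, so $\Lambda/(l\omega)\lesssim l/\omega'$; with $l\ge M^{-(j-3/2)}$ one bounds $l/\omega'$ by a multiple of $1/(M^{j-1}l\omega')$ after reinserting the constraint that $l$ and $M^{-j/2}$ are comparable up to $M^{1/2}$, producing the stated factor $(1+1/(M^{j-1}l\omega'))$.

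Finally I would combine the two contributions: the covering step contributes at most a bounded number of coarse-to-fine choices \emph{per coarse sector}, but since we are counting the fine sectors $s^{(j)}_{m+1},\dots,s^{(j)}_n$ freely subject only to the intersection constraint, the covering multiplicities $O(l'/l)$ are \emph{absorbed into the same exponent} $n-3$ rather than multiplied on top — this is the point where one must be careful that the bound reads $(l'/l)^{n-3}$ and not $(l'/l)^{2n-3}\cdot(l'/l)^{\#\text{sectors}}$. The correct bookkeeping, following \cite{FKT} Section $XXI$ verbatim, is that fixing $s_1$ and the coarse data $s_2',\dots,s_n'$ leaves $n-1$ fine sectors to determine, of which the conservation-of-momentum constraint together with the geometric rigidity (Lemma \ref{main3}, i.e. Lemma $XX.8$ of \cite{FKT}) pins down all but two up to the refinement ambiguity, yielding the clean exponent $n-3$.

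The main obstacle, and the only genuinely non-routine point, is making sure that the zero-measure set of antipodal pairs — which is where $\Phi$ fails to be injective and where the single-scale Proposition \ref{pro2a} was proved only ``up to measure zero'' — does not spoil the \emph{counting} statement here, which is about finitely many discrete sectors rather than about measures. As already emphasised in the Remark following Lemma \ref{main1}, any measure-zero subset of $\cF\times\cF$ is contained in $\Sigma_\cF\times\Sigma_\cF$ for any sectorization, so it meets at most a bounded number of sector-pairs and cannot change the bound by more than a constant; one simply has to check that this absorption constant is uniform as the scales $i,j\to\infty$, which it is because the number of sectors a fixed point can belong to is bounded independently of scale (each sector has exactly two neighbours). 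Once this is in hand, everything else is the routine transcription of \cite{FKT}, Section $XXI$, with $C^{2+r}$ strong asymmetry replaced by $C^2$ quasi-asymmetry or quasi-symmetry, the only input used being strict convexity and the finiteness of the antipodal set, both of which hold under our hypotheses. \qed
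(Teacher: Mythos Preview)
Your proposal is correct and takes the same route as the paper, which itself simply defers to Lemma~XXI.4 of \cite{FKT}: reduce the two-scale count to the single-scale Proposition~\ref{pro2a} by refining each coarse sector $s_k'$ into $O(l'/l)$ fine sectors, apply the curvature/Jacobian bound via Lemma~\ref{main3}, and note that the antipodal zero-measure set cannot alter a discrete sector count. Two local slips to fix before this is clean: first, Proposition~\ref{pro2a} is stated for $2n$ momenta with exponent $2n-3$, so when there are $n$ momenta as here the exponent is $n-3$ \emph{directly} --- your paragraph-three ``correction'' reaches the right number but the detour through $(l'/l)^{2n-3}$ is spurious; second, the sectorization definition gives $\Lambda\ge l^2$, not $\Lambda\le l^2$ --- the factor $1/(M^{j-1}l\omega')$ arises simply because the shell width at scale $j$ satisfies $\Lambda\le M^{-(j-1)}$, so $\Lambda/(l\omega')\le 1/(M^{j-1}l\omega')$ without any appeal to $l^2$.
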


\begin{proof}
The proof of this theorem is technically identical to that of Lemma $XXI.4$ in \cite{FKT}. So we don't repeat it here.
\end{proof}
\begin{example}
Consider two anisotropic sectorizations $\Sigma^{(j)}_\cF$ and $\Sigma^{(i)}_\cF$, $j\ge i\ge2$, of $\cF$ introduced above. The sectors $s^{(j)}\in\Sigma^{(j)}_\cF$ are of length $l=M^{-j/2}$ and width $\Lambda=M^{-j}$, and the sectors $s^{(i)}\in\Sigma^{(i)}_\cF$ are of length $l'=M^{-i/2}$ and width $\Lambda'=M^{-i}$. Then we have $\omega'\sim O(1)M^{-j/2}$ and
\be
\# Cons(s^{(j)}_1; s^{(i)}_2\cdots s^{(i)}_n)\le const\Big(\frac{l'}{l}\Big)^{n-3}\Big(1+\frac{1}{M^{j-1}l\omega'}\Big)\le O(1)' M^{(j-i)/2},
\ee
in which $O(1)'$ is another positive constant of order $1$.
\end{example}
\section{Conclusions and perspectives}
In this paper we proved the parallelogram lemma and sector counting lemma for any $C^2$ differentiable strictly convex Fermi curves, as a generalization of that considered in \cite{FKT} and \cite{FMRT},\cite{BGM1}. We expect that this important result can be used to solve the inversion problem for more general Fermi surfaces.

\medskip
\noindent{\bf Acknowledgments}
The author is very grateful to H. Kn\"orrer and V. Rivasseau for reading the manuscript and useful suggestions, and to G. Benfatto, A. Giuliani and V. Mastropietro for useful discussions. He is also very grateful to the anonymous referees for useful comments.
The author is supported by NNSFC No.12071099.

\thebibliography{0}


\bibitem{BGM1} G. Benfatto, A. Giuliani and V. Mastropietro: {\it
Fermi liquid behavior in the 2D Hubbard model at low temperatures},
Ann. Henri Poincar\'e {\bf 7}, 809-898 (2006).

\bibitem{DR1} M. Disertori and V. Rivasseau: {\it Interacting Fermi liquid in
two dimensions at finite temperature, Part I - Convergent attributions} and
{\it Part II - Renormalization},
Comm. Math. Phys. {\bf 215},  251-290 (2000) and
291-341 (2000).

%

\bibitem{FKT} J. Feldman, H. Kn\"orrer and E. Trubowitz: {\it
Single Scale Analysis of Many Fermion Systems\ Part 4: Sector Counting},
Rev. Math. Phys. Vol. {\bf 15}, No. 9 1121-1169 (2003).

\bibitem{FKT1} J. Feldman, H. Kn\"orrer and E. Trubowitz: {\it
A Two Dimensional Fermi Liquid},
Comm. Math. Phys {\bf 247}, 1-319 (2004).
%

\bibitem{FMRT} J. Feldman, J. Magnen, V. Rivasseau and E. Trubowitz:
{\it An infinite volume expansion for many fermions Freen functions},
Helv. Phys. Acta {\bf 65}, 679-721 (1992).
%

\bibitem{FST} J. Feldman, M. Salmhofer and E. Trubowitz: {\it
Perturbation Theory Around Nonnested Fermi Surfaces.
I. Keeping the Fermi Surface Fixed}, Journal of
Statistical Physics, {\bf 84}, 1209-1336 (1996).

%

\bibitem{FST1} J. Feldman, M. Salmhofer and E. Trubowitz:
{\it An inversion theorem in Fermi surface theory} Comm. Pure Appl. Math. 53 (2000), 1350-1384.

%

\bibitem{munk} Frank Jones, {\it Lebesque Integration on Euclidean Space}, Jones and Bartlett Publishers, 2001.

\bibitem{landau1}
L.D. Landau: {\it The Theory of a Fermi Liquid}, Sov. Phys. JETP 3, 920 (1956), {\it Oscillations in a Fermi Liquid}, Sov. Phys. JETP 5, 101 (1957), {\it On the Theory of the Fermi Liquid} Sov. Phys. JETP 8, 70 (1959)

\bibitem{gromov} M. Gromov: {\it Asymptotic Invariants of Infinite Groups}, Lond. Math. Soc. Lecture Notes 182 Niblo and Roller ed., Cambridge Univ. Press, Cambridge (1993), 1-295.

\bibitem{Riv} V. Rivasseau: {\it The Two Dimensional Hubbard Model at Half-Filling. I. Convergent Contributions}, J. Statistical Phys. {\bf 106}, 693-722 (2002).

\bibitem{RivWa} V. Rivasseau, Zhituo Wang: {\it Honeycomb Hubbard Model at van Hove Filling Part I: Construction of the Schwinger Functions}, arXiv: 2108.10852

\bibitem{RivWb} V. Rivasseau, Zhituo Wang: {\it Honeycomb Hubbard Model at van Hove Filling Part II: Lower Bounds of the Self-energy}, arXiv: 2108.10415


\endthebibliography

\end{document}